\documentclass[12pt, draftclsnofoot, onecolumn]{IEEEtran}
\makeatletter
\newcommand*\titleheader[1]{\gdef\@titleheader{#1}}
\AtBeginDocument{%
  \let\st@red@title\@title
  \def\@title{%
    \bgroup\normalfont\large\centering\@titleheader\par\egroup
    \vskip1.5em\st@red@title}
}
\makeatother

\usepackage[T1]{fontenc}%
%
\ifCLASSINFOpdf
  \usepackage[pdftex]{graphicx}
\else
  \usepackage[dvips]{graphicx}

\fi

\usepackage{array}

\ifCLASSOPTIONcompsoc
  \usepackage[caption=false,font=normalsize,labelfont=sf,textfont=sf]{subfig}
\else
  \usepackage[caption=false,font=footnotesize]{subfig}
\fi

%
\usepackage{amsmath}
\usepackage{amsthm}
\usepackage{amssymb} 
\usepackage{bbm}
\interdisplaylinepenalty=2500

%
\usepackage{cite}

\usepackage{url}

\usepackage[dvipsnames]{xcolor}
\usepackage{comment}

%

%
\usepackage{algorithm}
\usepackage{algorithmic}
\usepackage{enumitem}

\graphicspath{{figures/}}
\hyphenation{op-tical net-works semi-conduc-tor communica-tion}

\newtheorem{theorem}{Theorem}
\newtheorem{lemma}{Lemma}
\newtheorem{claim}{Claim}
\newtheorem{remark}{Remark}

\newtheorem{definition}{Definition}

\DeclareMathOperator*{\argmax}{arg\,max}

\allowdisplaybreaks

\newcommand{\MYfooter}{\smash{\scriptsize
\hfil\parbox[t][\height][t]{\textwidth}{\centering}\hfil\hbox{}}}

\makeatletter
\def\ps@headings{%
\def\@oddhead{\mbox{Submitted to IEEE Transactions on Wireless Communication}\scriptsize\rightmark \hfil \thepage}
\def\@evenhead{\scriptsize\thepage \hfil \leftmark\mbox{}}
\def\@oddfoot{\MYfooter}%
\def\@evenfoot{\MYfooter}}
\def\ps@IEEEtitlepagestyle{%
\def\@oddhead{\mbox{Submitted to IEEE Transactions on Wireless Communication }\scriptsize\rightmark \hfil \thepage}%
\def\@evenhead{\scriptsize\thepage \hfil \leftmark\mbox{}}%
\def\@oddfoot{\MYfooter}%
\def\@evenfoot{\MYfooter}}

\makeatother

\pagestyle{headings}
\addtolength{\footskip}{0\baselineskip}
\addtolength{\textheight}{0\baselineskip}

\begin{document}
%

\title{Relay Selection, Scheduling and Power Control in Wireless Powered Cooperative Communication Networks}

%
%
%

\author{Aysun~Gurur~Onalan, Elif~Dilek~Salik, \textit{Student Member}, Sinem~Coleri, \textit{Senior Member} \thanks{A. G. Onalan, E. D. Salik and S. Coleri are with the department of Electrical and Electronics Engineering, Koc University, Istanbul, Turkey, e-mail:~\texttt{\{aonalan17, esalik, scoleri\}@ku.edu.tr}. This work is supported by Scientific and Technological Research Council of Turkey Grant $\#$117E241. A preliminary version of the paper considering only part of the scheduling and power control has appeared in \cite{Salik19}.} }
\maketitle

\begin{abstract}

Relay nodes are used to improve the throughput, delay and reliability performance of energy harvesting networks by assisting both energy and information transfer between information nodes and access point. Previous studies on radio frequency energy harvesting networks are limited to single source single/multiple relay networks. In this paper, a novel joint relay selection, scheduling and power control problem for multiple source multiple relay network is formulated with the objective of minimizing the total duration of wireless power and information transfer. The formulated problem is non-convex mixed-integer non-linear programming problem, and proven to be NP-hard. We first formulate a sub-problem on scheduling and power control for a given relay selection. We propose an efficient optimal algorithm based on  a bi-level optimization over power transfer time allocation. Then, for optimal relay selection, we present optimal exponential-time Branch-and-Bound (BB) based algorithm where the nodes are pruned with problem specific lower and upper bounds. We also provide two BB-based heuristic approaches limiting the number of branches generated from a BB-node, and a relay criterion based lower complexity heuristic algorithm. The performance of the proposed algorithms are demonstrated to outperform conventional harvest-then-cooperate approaches with up to $88\%$ lower schedule length for various network settings. 

\end{abstract}

\begin{IEEEkeywords}
RF energy harvesting, harvest-then-cooperate, relay selection, scheduling, power control
\end{IEEEkeywords}

\section{Introduction}

Radio frequency energy harvesting (RF-EH) has drawn significant attention as an alternative to fixed power supplies (e.g. batteries) \cite{Lu2015}. RF-EH has also broadened the research on energy harvesting network designs as the RF signals can carry both energy and information. In the literature, two main RF-EH network designs are proposed: Simultaneous Information and Power Transfer (SWIPT)  \cite{Perera2018}, and Wireless Powered Communication Networks (WPCN) \cite{Kang2015, sinem2019}. In SWIPT, information and energy are sent simultaneously from the access point (AP) to multiple network nodes. In WPCN, nodes harvest energy from signals broadcast by the AP in downlink (DL), and then send their message to the AP in uplink (UL) by exerting this harvested energy. 


Recently, relay nodes have been considered in RF-EH networks for further improvement of network performance in terms of efficiency and reliability \cite{Lu2015}. In conventional networks, relaying is a promising strategy to increase system throughput, network coverage and energy efficiency by assisting information transmission of  users  \cite{Chen2012,Nosratinia2004}.  In RF-EH networks, apart from information forwarding, relays can also broadcast more energy to the users \cite{Mishra2017, Ammar2018}. This requires balancing the trade-off in the usage of relays for energy and information transmission. 

The first widely studied relay based energy harvesting network is the three node model, consisting of a source, a relay and a destination. \cite{Nasir2015,Gu2015,Ju2015} consider an EH relay in SWIPT where the source and the destination have an unlimited power  source. In these studies, the EH relay uses part of the information signal sent by the source for energy harvesting; while the remaining part of the signal is used to forward information to the destination. In three node WPCN model, which is also referred as Wireless Powered Cooperative Communication Network (WPCCN), both relay and source need to harvest energy \cite{Chen2015,Gu2015a,Li2016}. In the DL, AP broadcasts RF signals for energy harvesting at the relay and source. Then, in the first half of the information transmission (IT) period, the source broadcasts the information to the relay, or both  the relay and AP. In the second half, the relay conveys the information to the AP. For both SWIPT and WPCCN, average throughput and outage probability of the network are derived under different conditions (e.g. amplify or decode and forward relaying, with/without direct link between source and destination). Numerical results demonstrate the positive effect of relays on the outage and throughput of RF-EH networks.


SWIPT is extended for relay selection in a network containing one source, one destination, and multiple EH relays \cite{Krikidis2014,Gu2018,Wang2018,Zhao2016,Sui2018,Nasir2018}. After a relay is selected, simultaneous energy and data transmission is performed as in the three node model. A straightforward approach to the relay selection problem is to choose the relay randomly inside a circular sector with a central angle in direction of the destination \cite{Krikidis2014}. The relay selection schemes are also designed based on the remaining energies in relay batteries (for the cases where EH relays are equipped with batteries) \cite{Gu2018,Wang2018}, or  channel state information (CSI) either between source-and-relay, relay-and-destination or both  \cite{Zhao2016,Sui2018,Nasir2018}. Outage probability and average throughput of the networks are analyzed as performance indicator. 


The relay selection problem for WPCCN is only addressed in \cite{Chen2015}. In this work, the network consists of an AP, an EH source, and multiple EH relays. The first part of a time block is allocated for power transfer from the AP to the source and relays. The remaining is further divided into two equal length slots for IT in which the source first broadcasts the information towards the AP and relays, and then the selected relay amplifies and forwards the signal to the AP.  The performance of the aforementioned CSI based relay selection criteria is evaluated in terms of outage probability and throughput for fixed transmission rate. Although the optimal fraction of the time block between EH and IT to maximize throughput is discussed through simulation results, durations of the IT slots were not optimized but assumed equal. However, this optimization can bring significant improvement, especially for time critical network applications, such as real-time surveillance and networked control systems \cite{Sadi2014,Hou2012}. Furthermore, optimization of power control and rate adaption were not studied.

RF-EH networks with multiple source-destination pairs require separate study for the relay selection problem since individually optimum relays do not guarantee the best relay selection for the entire network. To the best of our knowledge, the multiple source relay selection problem is considered only in \cite{Ding2014} for SWIPT  and has not been analyzed for WPCCN in the literature. In SWIPT, relay selection is performed by employing a game theoretic approach, where  the sources are competing for the help of relays. 


In this paper, we study multiple source multiple relay WPCCN where relays and sources harvest energy from the RF signals broadcast by an AP for information transmission. Our goal is to determine the optimal relay selection, power control, and scheduling for energy  harvesting and information  transmission. The major contributions of the paper are listed as follows.
\begin{itemize}
\item We consider multiple source multiple relay WPCCN; whereas the earlier work in the literature is limited to the single source single/multiple relay WPCCN. 

\item A novel optimization problem is formulated for joint relay selection, scheduling and power control. Besides the usual energy and data causality constraints, we include maximum transmission power constraints. The objective is to minimize the duration of the schedule for wireless power transfer and data transmission. The formulated problem is a non-convex mixed integer non-linear problem (MINLP).

\item The formulated problem is proved to be NP-hard, so, as a solution strategy, we propose a bottom-up approach starting from a simpler sub-problem of scheduling and power control, then, extending the problem with additional relay selection variables.


\item A scheduling and power control problem is formulated where the objective is to minimize the schedule length for the energy and data transmission and the constraints concern energy, data, and maximum transmit power requirements. An efficient optimum and a faster sub-optimal algorithm are proposed for the solution of the problem. The former is based on the bi-level transformation of the problem based on the convexity of the individual optimal IT lengths with respect to the EH length; whereas the latter computes the IT duration of the sources for a sub-optimal EH length. 


\item Extending scheduling and power control problem with relay selection variables requires the search over all possible source-relay combinations. We adapt Branch-and-Bound to the solution of our problem so that we propose an optimum and two heuristic algorithms. The optimal algorithm systematically searches for  relay-source combinations by following the nodes of a BB-tree. A node is pruned if it is not promising to provide better feasible solution than the one obtained at the previous nodes. For pruning decision, problem specific lower and upper bounds are developed. The heuristic approaches either limit the number of branches generated from each node or completely remove the branching to obtain solutions quickly. After any relay assignment is completed, there is no need to branch on other variables. The scheduling and power control problem is solved to obtain the schedule length. 

\item Although the BB-based heuristics shorten the time to reach a feasible solution, their run-time still rapidly increases with the number of the nodes of the WPCCN. For better time complexity, an improvement type heuristic approach is proposed. The heuristic starts with an initial solution determined by CSI-based relay selection criterion, and then, alters the selected relay for a source in each iteration until there is no improvement in schedule length. 

\item We illustrate the superiority of the proposed algorithms to previously proposed conventional harvest-then-cooperate protocol \cite{Chen2015} in terms of schedule length and run-time for various network sizes and transmit power levels.

\end{itemize}

The rest of the paper is organized as follows. Section~\ref{section:sysmod} describes the system model and assumptions used throughout the paper. In Section~\ref{section:probForm}, the joint relay selection, scheduling, and power control problem is formulated.  Section~\ref{section:subprob} introduces and solves the sub-problem on scheduling and power control. The BB based optimal and heuristic algorithms are proposed  in Sections~\ref{section:relsel} and \ref{section:sBB2}, respectively. Section~\ref{seciton:relCritHeur} provides a relay selection criterion based heuristic algorithm. Section~\ref{section:simRes} presents the simulation results. Finally, Section~\ref{section:conc} concludes the paper.



\section{System Model} \label{section:sysmod}



\begin{itemize}
\item  The WPCCN contains one AP, $N$ information sources, denoted by $S_i$, $i = 1,2,\ldots, N$, and $K$ decode-and-forward (DF) relays, denoted by $R_j$, $j=1,2,\ldots K$.  The AP, all the sources and relays are equipped with a single antenna and operate in the same frequency band. 

\item  The AP is assumed to have an unlimited power source. The transmission power of the AP is assumed to be constant and denoted by $P_{A}$. The sources and relays only have rechargeable batteries and no other embedded energy supplies. Therefore, they have to use the harvested energy from the RF signals broadcast by the AP for the information transmission.

\item  TDMA protocol is used as medium access control protocol. The time is divided into time blocks, which are further divided into 3 time slots for EH, IT of sources, and IT of relays, respectively.  In the first time slot, relays and information sources harvest energy from the AP during $\tau_0$ amount of time.  With the harvested energy, each source communicates directly to the AP, or information is transferred from the source to the relay and then from the relay to the AP. Accordingly, IT time slots are further divided into sub-slots corresponding to each source and relay.  Assume that  the $j^{th}$ relay is selected by the $i^{th}$ source for cooperation. Note that each source can cooperate with a single relay. Then, a sub-slot is allocated for the IT from source $S_i$ to relay $R_j$ with duration $\tau_{S_i}^{R_j}$ and another sub-slot is allocated for the IT from $R_j$ to $AP$ with duration $\tau_{R_j}^{AP}$ for $i = 1,2,\ldots, N$ and $j=0,1,2,\ldots K$, where $R_0$ refers to $AP$.  If the $i^{th}$ source prefers direct communication with the AP, information transfer is completed in the first IT block in $\tau_{S_i}^{R_0}$ amount of time and $\tau_{R_0}^{AP}$ is obviously zero.  Further, it is assumed that the harvested energy in each block is only used in that block and not stored for further use.

\item The AP maintains the synchronization of the network and informs the information sources and relays about the current schedule at the beginning of each time block.
\item Block fading channels are assumed, where the channel gains remain constant during each transmission block but change independently from one block to another.  Let $h_{X}^Y $ and  $g_{X}^Y $  be the channel gains between $X$ and $Y$ for DL and UL, respectively, where $X,Y \in \{AP, S_i, R_j\}$, $i = 1,2,\ldots, N$ and $j=1,2,\ldots K$ (e.g. $h_{AP}^{S_i}$ is the DL channel gain between $AP$ and $S_i$).  All the channel gains are assumed to be known by the AP at the beginning of each time block, similar to the previous work, e.g., \cite{Ju2014,Zhao2016,Kang2015}.
x
\item Each information source $S_i$ needs to send $D_{S_i}$ amount of data to the AP, for $i = 1,2,\ldots, N$. Moreover, the relays must convey all the information they receive from the sources.

\item  We assume that $P_{A}$  is large enough to ignore AWGN noise component during EH phase and all the sources and relays have constant harvesting efficiency. Thus, the energy harvested at the $i^{th} $ source and $j^{th}$ relay can be expressed as
\begin{equation}
	E_{S_i} = \zeta_i \tau_0  P_A h_{AP}^{S_i} \text{ and } E_{R_j} = \zeta_j \tau_0  P_A h_{AP}^{R_j},
\label{eq:EHatX}
\end{equation}
respectively, where $\zeta_i$ and $\zeta_j$ are the energy harvesting efficiency of $S_i$ and $R_j$, respectively, for $i=1,\ldots, N$ and  $j=1,\ldots, K$.

\item In IT phase, we consider continuous power model, which is a commonly used model in the literature \cite{Zhang2016,Sadi2014}, so the transmit power takes any value below a maximum level $P^{max}$. The transmit power is further limited by the amount of the harvested energy given in Eq.~(\ref{eq:EHatX}). The required energy for decoding the received message at relays is negligible compared to the energy consumption for IT \cite{Gu2015}. Thus, 
\begin{equation} 
P_{S_i}^{R_j} \leq \frac{E_{S_i}}{\tau_{S_i}^{R_j}} \text{ and } P_{R_j}^{AP} \leq \frac{E_{R_j}}{\tau_{R_j}^{AP}},
\label{eq:transPowSource} 
\end{equation}
where $P_{S_i}^{R_j}$ and $P_{R_j}^{AP}$ are the average transmit power of $S_i$ and $R_j$ during IT, respectively.




\item We use continuous transmission rate model and Shannon's channel capacity formula for AWGN channels to determine maximum achievable rate. We ignore interference as all sources and relays use separate time slots for IT. Hence, the instantaneous UL transmission rates $T_{S_i}^{R_j}$ from ${S_i}$ to ${R_j}$  and $T_{R_j}^{AP}$ from ${R_j}$ to $AP$ are given by
\begin{align}
T_{S_i}^{R_j} &= W \log_2 \left( 1 + P_{S_i}^{R_j}g_{S_i}^{R_j}/(WN_0) \right) \\
T_{R_j}^{AP} &= W \log_2 \left( 1 +  P_{R_j}^{AP}g_{R_j}^{AP}/(WN_0) \right)
\label{eq:rateS}
\end{align}

\end{itemize}

\section{Problem Formulation} \label{section:probForm}

A joint relay selection, time allocation and power control problem with the objective of total time minimization subject to demand, energy and power constraints is formulated as follows.

\begin{subequations} \label{eq:relSel}
\begin{align}
\min \qquad &  \tau_0 + \sum_{i=1}^N \sum_{j=0}^K \tau_{S_i}^{R_j} + \sum_{j=1}^K \tau_{R_i}^{AP} \label{eq:relSel:obj} \\
\text{s.t.}\qquad & \sum_{j=0}^K b_i^j = 1, \qquad i=1,\ldots,N, \label{eq:relSel:relSel}\\
									& P_{S_i}^{R_j}  \leq P^{max} b_i^j, \quad i=1,\ldots,N, \; j=0,\ldots,K, \label{eq:relSel:maxpow1} \\
									& P_{R_j}^{AP}  \leq  \min \left\{P^{max},P^{max} \sum_{i=1}^N  b_i^j \right\}, \quad j=0,\ldots,K, \label{eq:relSel:maxpow2} \\
									& P_{S_i}^{R_j} \tau_{S_i}^{R_j} \leq \zeta_i P_{A} h_{AP}^{S_i} \tau_0, \qquad i=1,\ldots,N, \quad j=0,\ldots,K, \label{eq:relSel:energy1}\\
									& P_{R_j}^{AP} \tau_{R_j}^{AP} \leq \zeta_j P_{A} h_{AP}^{R_j} \tau_0, \qquad j=1,\ldots,K, \label{eq:relSel:energy2}\\
									& \tau_{S_i}^{R_j} W \log_2 \left( 1 + \frac{P_{S_i}^{R_j}g_{S_i}^{R_j}}{WN_0} \right)\geq D_{S_i}b_i^j, \quad i=1,\ldots,N, \; j=0,\ldots,K,  \label{eq:relSel:demand1}\\
									& \tau_{R_j}^{AP} W \log_2 \left( 1 + \frac{P_{R_j}^{AP}g_{R_j}^{AP}}{WN_0} \right) \geq \sum_{i=1}^N  D_{S_i}b_i^j, \quad j=0,\ldots,K, \label{eq:relSel:demand2}\\									
									&  \tau_0,\tau_{R_j}^{AP}, P_{R_j}^{AP},\tau_{S_i}^{R_j}, P_{S_i}^{R_j} \geq 0, \quad i=1,\ldots,N, \; j=1,\ldots,K \label{eq:relSel:nonneg}\\
									& b_i^j \in \{0,1\},  \quad i=1,\ldots,N, \; j=1,\ldots,K.		\label{eq:relSel:integrality}	
\end{align}
\end{subequations}

The variables of the optimization problem are $\tau_0$, EH duration; $\tau_{S_i}^{R_j}$, IT duration from the $i^{th}$ source to the $j^{th}$ relay;  $\tau_{R_j}^{AP}$, IT duration from the $j^{th}$ relay to the AP; $P_{S_i}^{R_j}$, transmit power of  the $i^{th}$ source when it transmits to the $j^{th}$ relay; $P_{R_j}^{AP}$, transmit power of the $j^{th}$ relay when it transmits to the AP; and $b_i^j$, the relay selection parameter taking value 1 if the $j^{th}$ relay is selected for the $i^{th}$ source and 0 otherwise.

The objective of the optimization problem is to minimize the duration of the schedule for energy harvesting and data transmission. Eq.~(\ref{eq:relSel:relSel}) guarantees that one and only one relay is selected for each source. Note that a relay can be chosen for more than one source. Eqs.~(\ref{eq:relSel:maxpow1}) and (\ref{eq:relSel:maxpow2}) determine the maximum allowable transmit power of  the sources and relays, respectively. If $R_j$ is not selected for $S_i$, $P_{S_i}^{R_j}$ is forced to be zero in Eq.~(\ref{eq:relSel:maxpow1}).  If $R_j$ is not selected for any source, then $P_{R_j}^{AP}$ is set to zero  by Eq.~(\ref{eq:relSel:maxpow2}).   Eqs.~(\ref{eq:relSel:energy1}) and (\ref{eq:relSel:energy2}) represent the limit on the transmit power of sources and relays imposed by the harvested energy amount. Demand constraint  Eq.~(\ref{eq:relSel:demand1}) requires that each source $S_i$ conveys a certain amount $D_{S_i}$ of data to the AP with/without  the help of a relay. Relays need to convey all data they gather from users. Accordingly, demand requirement of a relay is given in Eq.~(\ref{eq:relSel:demand2}). Lastly, Eq.~(\ref{eq:relSel:nonneg}) and (\ref{eq:relSel:integrality}) represent non-negativity and integrality constraints, respectively.

The described problem is a non-convex MINLP as all equations except Eqs.~(\ref{eq:relSel:relSel})-(\ref{eq:relSel:maxpow2}) are non-linear and Eqs. (\ref{eq:relSel:energy1})-(\ref{eq:relSel:demand2}) are non-convex. Next, we show the NP-hardness of the problem.

\subsection{NP-Hardness}
\begin{theorem}  The joint relay selection, time allocation and power control problem in Eq.~(\ref{eq:relSel}) is NP-hard. \end{theorem}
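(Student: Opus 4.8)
The plan is to exhibit a polynomial-time reduction from a known NP-hard combinatorial problem to a restricted family of instances of the problem in Eq.~(\ref{eq:relSel}), so that even deciding whether a schedule of a prescribed target length exists is NP-hard. I would reduce from the Partition problem (and note that the same construction driven by 3-Partition or Bin Packing yields \emph{strong} NP-hardness): given positive integers $a_1,\dots,a_n$ with $\sum_i a_i = 2B$, decide whether the index set admits a split into two parts of equal weight $B$.

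The first step is to localize the combinatorial difficulty. For any fixed relay selection $\{b_i^j\}$, the continuous variables $(\tau_0,\tau_{S_i}^{R_j},\tau_{R_j}^{AP},P_{S_i}^{R_j},P_{R_j}^{AP})$ reduce to the scheduling and power control sub-problem of Section~\ref{section:subprob}, which admits a polynomial-time optimal algorithm. Hence the continuous part is benign, and any hardness must be forced entirely through the integer assignment variables $b_i^j$. The reduction therefore aims to make the choice of $\{b_i^j\}$ encode a Partition instance while keeping every fixed-assignment subproblem trivially solvable.

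The second step is the construction. I would build an instance with $n$ sources having demands $D_{S_i}$ proportional to $a_i$, two relays $R_1,R_2$ with identical tailored gains $h_{AP}^{R_j},g_{R_j}^{AP}$ and efficiencies $\zeta_j$, and parameters chosen so that (i) direct transmission through $R_0$ is made prohibitively costly, driving every source onto a relay, and (ii) the energy a relay harvests scales linearly with the shared time $\tau_0$, whereas the energy it must expend to forward its aggregate assigned demand grows convexly in that aggregate (which follows from inverting Shannon's formula: the required energy contains a term of the form $2^{D/(\tau W)}$, convex and increasing in the forwarded data $D$). Because $\tau_0$ is shared by all nodes and enters the objective, the optimal $\tau_0$ is governed by the most heavily loaded relay, i.e.\ by $\max_j f(D^{(j)})$ with $D^{(j)}$ the aggregate load of $R_j$ and $f$ convex increasing. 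With two symmetric relays and fixed total load $2B$, this maximum is minimized exactly at the balanced split $D^{(1)}=D^{(2)}=B$. Setting the target length $L^{*}$ to the value attained at a balanced split, I would argue that a length-$L^{*}$ schedule exists iff the Partition instance is a yes-instance, since over integer demands both loads can be at most $B$ (with total $2B$) only when each equals exactly $B$.

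The final step is to verify the equivalence and the polynomiality of the encoding, and this is where I expect the main obstacle. The delicate point is tightness: using the closed-form optimum of the fixed-assignment subproblem, I must show that the minimum achievable length, viewed as a function of the two relay loads, is strictly convex and symmetric and is minimized only at the balanced split, so that no continuous maneuvering -- neither the $\tau_0$-versus-$\tau_{R_j}^{AP}$ trade-off nor unequal power allocation below $P^{max}$ -- lets an unbalanced assignment reach $L^{*}$. Concretely, a no-instance forces some relay load of at least $B+1$, hence $\tau_0\ge f(B+1)>f(B)$, and I must confirm this strictly positive gap in schedule length survives the continuous optimization and that all parameters ($D_{S_i}$, the gains, $P^{max}$, $P_A$) are representable in polynomially many bits. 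Establishing this separation rigorously is the crux; the remaining verification that the reduction runs in polynomial time is routine.
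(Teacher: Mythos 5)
Your reduction is, at its core, the same as the paper's: the paper reduces from minimum makespan scheduling on identical machines (of which Partition is precisely the two-machine decision version), by making the relays identical so that the only surviving decision is which relay forwards which demands, with the schedule length governed by the most heavily loaded relay. Two differences are worth noting, both in your favor. First, the paper's instance sets \emph{all} channel gains equal --- including the direct source-to-AP links --- so in that instance every source can simply choose $b_i^0=1$ and skip relaying altogether; relaying only adds the forwarding slot $\tau_{R_j}^{AP}$, so the instance has a trivial all-direct optimum and does not actually encode the assignment problem. Your step of making the direct link prohibitively costly is the ingredient the paper's construction is missing. Second, you set the reduction up properly as a decision problem with a target length $L^{*}$ and identify the real technical obligation: showing that the minimum schedule length is a symmetric, strictly Schur-convex function of the relay loads, so that no unbalanced assignment can reach $L^{*}$ after the continuous re-optimization, and that $L^{*}$ (which involves transcendental quantities such as the Lambert $W$ function) admits a polynomial-size encoding. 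The paper asserts the load-balancing claim in one sentence and never addresses representability, so the step you flag as ``the crux'' is equally unproved in the published argument; your outline is the more rigorous of the two, but it does not become a complete proof until that separation lemma is written down.
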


\begin{proof} 

We reduce the NP-hard minimum Makespan Scheduling Problem (MSP) on identical machines to an instance of Problem~(\ref{eq:relSel}). MSP aims to find an assignment of the $N$ jobs with processing times $t_i, i=1,\ldots,N,$ to $K$ identical machines such that the makespan, which is the time elapsed until all jobs completed, is minimized. 

Let us define a problem instance where UL channels gains are equal to each other as DL gains, i.e., $g_{X}^Y =g$ and $h_{X}^Y =h$, where $X,Y \in \{AP, S_i, R_j\}, \forall  i=1,\ldots,N, \; j=0,\ldots,K$. In this condition, the relays become identical. Any relay brings same amount of improvement in IT duration of a source node, which removes the effect of the relay selection on IT durations. As the channel gains are same, the required energy for $D_{S_i}$ amount of data transmission from $S_i$ to $R_j$ is same for any $j$, i.e., not affected by the relay assignment. However, a relay assisting many sources necessitates more energy for IT which results in longer EH duration. For the minimum EH duration, the data amount forwarded by the relays should be balanced. Consider that $D_{S_i}$ as processing time for $S_i, \forall i,j$ as IT lengths and required energy are directly related to the data amount. Hence, the problem instance turns into an MSP which asks for an assignment of $N$ sources with different data demands $D_i$ to $K$ identical relays such that the maximum of total data amount forwarded by a relay is minimized. Since MSP is NP-hard and  reduced to an instance of the problem, Problem~(\ref{eq:relSel}) is NP-hard. \end{proof}

\subsection{Solution Strategy}
For the solution of Problem~(\ref{eq:relSel}), we first formulate a sub-problem on scheduling and power control for WPCN with multiple source-destination pairs and an AP as energy source, which is the reduced form of the original problem for a given relay selection. A preliminary version of this problem and the solution algorithms has previously appeared in \cite{Salik19}. 
Then, the best relay selection is searched either by BB based techniques or an improvement type heuristic approach.

\section{Scheduling and Power Control Problem} \label{section:subprob}

This section discusses the  scheduling  and  power  control  problem  for WPCN  with  multiple  source-destination  pairs. Each source has  a  destination, which  is either the  selected  relay targeting the AP  or  the  AP if no  relay  is  chosen  for  that  source.  Since relays have to convey the information gathered from the sources to the AP, they act like sources. Therefore, in this section, the  term source covers both  the sources and the selected  relays, the total number of which is denoted by $N$. 

During $\tau_0$ amount of time, sources harvest energy from the  signals broadcast by the AP in DL. Then, the $i$-th source transmits $D_{S_i}$ amount of data to the corresponding destination during $\tau_{S_i}$ amount of time in UL, for $ i=1,2,\ldots,N$. Accordingly, the scheduling and power control problem is given as follows:
\begin{subequations} \label{eq:problm1APNS}
\begin{align}
\min \qquad &  \tau_0 + \sum_{i=1}^{N} \tau_{S_i} \\
\text{s.t.}\qquad & P_{S_i} \tau_{S_i} \leq \zeta_i P_{A} h_{S_i} \tau_0, \; \quad  i=1,2,\ldots,N \label{eq:energy1APNS}\\
									& \tau_{S_i} W \log_2 \left( 1 + \frac{P_{S_i}g_{S_i}}{WN_0} \right) \geq D_{S_i} , \quad i=1,2,\ldots,N \label{eq:data1APNS}\\
									& P_{S_i} \leq P^{max}, \qquad  i=1,2,\ldots,N \label{eq:pmax1APNS}  \\
									&\tau_0, \tau_{S_i}, P_{S_i}  \geq 0, \qquad i=1,2,\ldots,N \label{eq:nonneg1APNS}
\end{align}
\end{subequations}
where $h_{S_i}$ is the DL channel gain between $S_i$ and AP, $g_{S_i}$ is the UL channel gain between $S_i$ and its destination, $P_{S_i}$ is the transmit power of $S_i$, for $i=1,\ldots, N$.  


Problem~(\ref{eq:problm1APNS}) is a non-convex non-linear optimization problem due to the non-convexity of Eqs.~(\ref{eq:energy1APNS}) and (\ref{eq:data1APNS}). We first give the optimal solution of the single source scenario. Then, we extend our solution strategy to the multiple source scenario by modeling the system as multiple single source networks and designing a bi-section search for a mutual EH duration of such networks. Moreover, single source solutions will be used in the sub-optimal algorithm, where optimum EH lengths of each individual source are computed and the maximum of them is set as mutual EH duration. 

\subsection{Single Source}
In this subsection, we consider a single source WPCN, i.e., $N=1$. We first analyze Problem~(\ref{eq:problm1APNS}) without constraint (\ref{eq:pmax1APNS}), and then provide the optimal solution by additionally considering constraint (\ref{eq:pmax1APNS}) in  Theorem~\ref{thm:singleSource}. 

For Problem~(\ref{eq:problm1APNS}) without constraint (\ref{eq:pmax1APNS}), when there is a single source in the network, EH duration $\tau_{0}$ should be adjusted just to meet $S_1$'s power requirement, i.e., Eq.~(\ref{eq:energy1APNS}) holds with equality. Therefore, we can combine Eqs.~(\ref{eq:energy1APNS}) and (\ref{eq:data1APNS}) as

\begin{equation}
\tau_{S_1} W \log_2 \left( 1 + \frac{\zeta_1 P_{A} h_{S_1} \tau_0g_{S_1}}{\tau_{S_1}WN_0} \right) \geq D_{S_1},
\label{eq:combEnergyAndData}
\end{equation}
which represents a convex set. Before Theorem~\ref{thm:singleSource}, we present the following lemmas and definitions that are used in the proof of the theorem.

\begin{lemma}\label{lemma:f}
Let us rewrite Eq.~(\ref{eq:combEnergyAndData}) such that $\tau_{0}$ is a function of $\tau_{S_1}$, i.e.
\begin{equation}
\label{rewriteCons21}
\tau_{0} \geq V(\tau_{S_1}) \triangleq \frac{\tau_{S_1}}{\gamma_{S_1}}(2^{\frac{D_{S_1}}{W\tau_{S_1}}}-1),
\end{equation}
where $ \gamma_{S_1} = \frac{g_{S_1} \zeta_1 P_A h_{S_1}}{WN_0}$. $V(\tau_{S_1})$ is strictly decreasing with respect to $\tau_{S_1}$. 
\end{lemma}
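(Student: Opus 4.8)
The plan is to first record the algebraic rewriting, which is immediate, and then devote the real work to the strict monotonicity claim, which I would settle through the sign of $V'$. For the rewriting, I would divide Eq.~(\ref{eq:combEnergyAndData}) by $W\tau_{S_1}$, exponentiate base $2$, and isolate $\tau_0$; since $\tau_{S_1},W>0$ the inequality direction is preserved, and one lands exactly on $\tau_0\ge V(\tau_{S_1})$ with $\gamma_{S_1}$ as defined. This step is routine and I would not dwell on it.

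For the monotonicity, I would differentiate $V$. Writing $a=D_{S_1}/W>0$ and keeping the positive factor $1/\gamma_{S_1}$ out front, a short computation gives
\[
V'(\tau_{S_1}) = \frac{1}{\gamma_{S_1}}\left[2^{a/\tau_{S_1}}\left(1 - \frac{a\ln 2}{\tau_{S_1}}\right) - 1\right].
\]
The sign of the bracket is not visible by inspection, because it sets the positive term $2^{a/\tau_{S_1}}$ against the negative correction $-(a\ln 2/\tau_{S_1})2^{a/\tau_{S_1}}$; this is precisely the point where a little care is needed, and it is the crux of the argument.

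To pin the sign down I would substitute $u=a/\tau_{S_1}$, which ranges over $(0,\infty)$, reducing the task to showing that the auxiliary function $f(u)\triangleq 2^{u}(1-u\ln 2)-1$ satisfies $f(u)<0$ for every $u>0$. I would verify $f(0)=0$ and compute $f'(u)=-u(\ln 2)^2\,2^{u}$, which is strictly negative for $u>0$. Hence $f$ decreases strictly from the value $0$ and is therefore negative on $(0,\infty)$. Since $\gamma_{S_1}>0$ and the bracket in $V'$ equals $f(u)$, this yields $V'(\tau_{S_1})<0$ for all $\tau_{S_1}>0$, i.e.\ $V$ is strictly decreasing.

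The only non-mechanical obstacle is recognizing that the raw derivative has an indefinite sign and that the clean single-variable inequality $f(u)<0$ is what must be checked; once the substitution $u=a/\tau_{S_1}$ is in place, the boundary value $f(0)=0$ and the factored derivative finish the job. As a backup I would keep the change of variable $t=1/\tau_{S_1}$, under which $V$ is proportional to $(e^{at\ln 2}-1)/t$; showing this is increasing in $t$ (equivalently $V$ decreasing in $\tau_{S_1}$) reduces to $e^{s}(s-1)+1>0$ for $s>0$, which again follows from vanishing at $s=0$ and the positive derivative $e^{s}s$.
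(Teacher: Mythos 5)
Your proof is correct and follows essentially the same route as the paper: compute $V'$, observe the positive factor $1/\gamma_{S_1}$, and show the remaining bracket is negative because, as a function of $u = D_{S_1}/(W\tau_{S_1})$, it decreases strictly from the boundary value $0$ at $u=0$. You are somewhat more explicit than the paper (which asserts the numerator is decreasing without computing $f'(u)=-u(\ln 2)^2 2^u$), but the argument is the same.
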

\begin{proof}
We show that the first order derivative of $V(\tau_{S_1})$ is negative to prove the strictly decreasing behavior of $V(\tau_{S_1})$ with respect to $\tau_{S_1}$. 

\begin{equation}
\label{derivativeOfV}
\frac{d V(\tau_{S_1})}{d \tau_{S_1}} = \frac{2^{\frac{D_{S_1}}{W\tau_{S_1}}}-2^{\frac{D_{S_1}}{W\tau_{S_1}}}\frac{D_{S_1}}{W\tau_{S_1}}\ln2-1}{\gamma_{S_1}}.
\end{equation}

The denominator of Eq.~(\ref{derivativeOfV}) is always positive. The numerator of Eq.~(\ref{derivativeOfV}) is always negative since it is a decreasing function of $\frac{D_{S_1}}{W\tau_{S_1}}$, and  it takes its maximum value 0 as $\frac{D_{S_1}}{W\tau_{S_1}}$ goes to 0. Hence, $\frac{d V(\tau_{S_1})}{d \tau_{S_1}} <0$.  





\end{proof}

\begin{lemma} The solution to Problem~(\ref{eq:problm1APNS}) without constraint (\ref{eq:pmax1APNS}) is given by
\begin{align} 
&\dot{\tau}_{S_1} =  \frac{D\ln(2)}{W\alpha_{S_1}}, \label{eq:ITwoR} \\
&\dot{\tau}_0= \frac{D\ln(2)}{W\alpha_{S_1} \gamma_{S_1}} \left(2^{\alpha_{S_1}/\ln(2)}-1\right), \label{eq:EHwoR}
\end{align}
where 
$ \gamma_{S_1} = \frac{g_{S_1} \zeta_1 P_A h_{S_1}}{WN_0}$, $ 
\alpha_{S_1} = \mathbb{L}_0 \left(\frac{\gamma_{S_1}-1}{e}\right) +1$, and $\mathbb{L}_0(.) $ is the Lambert W-function in 0 branch.
\label{lemma:woRelSol}
\end{lemma}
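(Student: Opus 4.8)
The plan is to collapse Problem~\eqref{eq:problm1APNS} (without \eqref{eq:pmax1APNS}) into a one-dimensional convex program and then solve its first-order optimality condition in closed form via the Lambert W-function. First I would invoke Lemma~\ref{lemma:f}: at an optimum the energy constraint \eqref{eq:energy1APNS} is tight, so the feasible set reduces to $\{(\tau_0,\tau_{S_1}): \tau_0 \ge V(\tau_{S_1}),\ \tau_{S_1}>0\}$ with $V$ the strictly decreasing function of \eqref{rewriteCons21}. Because the objective $\tau_0+\tau_{S_1}$ is strictly increasing in $\tau_0$, any optimizer must satisfy $\tau_0 = V(\tau_{S_1})$. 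Substituting this equality eliminates $\tau_0$ and leaves the single-variable problem $\min_{\tau_{S_1}>0} f(\tau_{S_1})$, where $f(\tau_{S_1}) \triangleq V(\tau_{S_1}) + \tau_{S_1}$.

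Next I would establish that $f$ is convex, so that a stationary point is automatically the global minimizer. The text already observes that \eqref{eq:combEnergyAndData} describes a convex set; since that set is precisely the epigraph $\{\tau_0 \ge V(\tau_{S_1})\}$, this is equivalent to convexity of $V$, and adding the linear term $\tau_{S_1}$ preserves convexity. Together with $V(\tau_{S_1}) \to \infty$ as $\tau_{S_1}\to 0^+$ and $f(\tau_{S_1})\to\infty$ as $\tau_{S_1}\to\infty$, this guarantees a unique interior minimizer characterized by $f'(\tau_{S_1})=0$, i.e. $V'(\tau_{S_1}) = -1$. Plugging in the derivative from \eqref{derivativeOfV}, the stationarity condition becomes $2^{D_{S_1}/(W\tau_{S_1})}\bigl(1-\tfrac{D_{S_1}}{W\tau_{S_1}}\ln 2\bigr) = 1 - \gamma_{S_1}$.

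The crux, and the step I expect to be the main obstacle, is solving this transcendental equation in closed form. I would substitute $v = \frac{D_{S_1}\ln 2}{W\tau_{S_1}}$, so that $2^{D_{S_1}/(W\tau_{S_1})} = e^{v}$ and the condition reads $e^{v}(1-v) = 1-\gamma_{S_1}$. Writing $w = v-1$ and rearranging yields $w\,e^{w} = \frac{\gamma_{S_1}-1}{e}$, which by the defining property of the Lambert W-function inverts to $w = \mathbb{L}_0\!\bigl(\tfrac{\gamma_{S_1}-1}{e}\bigr)$, hence $v = w+1 = \alpha_{S_1}$. I would justify selecting the principal branch $\mathbb{L}_0$ by noting that $\gamma_{S_1}>0$ forces the argument $\tfrac{\gamma_{S_1}-1}{e} > -1/e$, which is the real domain of $\mathbb{L}_0$, and that the resulting $w>-1$ (so $v>0$) yields a finite, positive $\tau_{S_1}$.

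Finally, unwinding the substitution gives $\dot{\tau}_{S_1} = \frac{D_{S_1}\ln 2}{Wv} = \frac{D_{S_1}\ln 2}{W\alpha_{S_1}}$, which is \eqref{eq:ITwoR}. Back-substituting into $\tau_0 = V(\tau_{S_1})$ and using $\frac{D_{S_1}}{W\dot{\tau}_{S_1}} = \frac{\alpha_{S_1}}{\ln 2}$ produces $\dot{\tau}_0 = \frac{D_{S_1}\ln 2}{W\alpha_{S_1}\gamma_{S_1}}\bigl(2^{\alpha_{S_1}/\ln 2}-1\bigr)$, matching \eqref{eq:EHwoR}. Apart from the Lambert W inversion and the branch selection, the remaining manipulations are routine reductions and back-substitutions.
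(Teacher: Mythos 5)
Your proof is correct and is essentially the paper's argument in analytic form: the paper finds the value of $\tau_{S_1}$ at which the line $\tau_0 = -\tau_{S_1} + C_1$ is tangent to the curve $V(\tau_{S_1})$, which is exactly your first-order condition $V'(\tau_{S_1}) = -1$, and the Lambert-W inversion you carry out is the computation the paper defers to \cite{Salik19}. Your explicit justification of the convexity of $V(\tau_{S_1}) + \tau_{S_1}$ and of the choice of the $\mathbb{L}_0$ branch fills in details the paper leaves implicit.
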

\begin{proof}
 
Let us think of the objective function in the slope-intercept form, e.g., $\tau_{0} = -\tau_{S_1} + C_1$, where $C_1$ is the total transmission time to be minimized. We illustrate the graphical relationship between the curve $V(\tau_{S_1}) = \frac{\tau_{S_1}}{\gamma_{S_1}}(2^{\frac{D_{S_1}}{W\tau_{S_1}}}-1)$ and the line $\tau_{0} = -\tau_{S_1}+C_1$ in Fig. (\ref{fig:cons21}). The area above the curve $V(\tau_{S_1})$ represents the feasible region. The parallel lines represent the different values of $C_1$. There are three possible cases: The line intersects with the curve at two points, at one unique point, i.e., tangent line, and does not intersect at all.

 The intersection point with the tangent line is the optimal solution of the problem. $\tau_{S_1}$ and $\tau_{0}$ values at this unique tangent point are given by Eqs.~(\ref{eq:ITwoR}) and (\ref{eq:EHwoR}), respectively. For the details of this proof, please refer to \cite{Salik19}.
%
\end{proof}

\begin{figure}[!ht]%
\centering
\subfloat[]{\label{fig:cons21}
\includegraphics[width=.5\columnwidth]{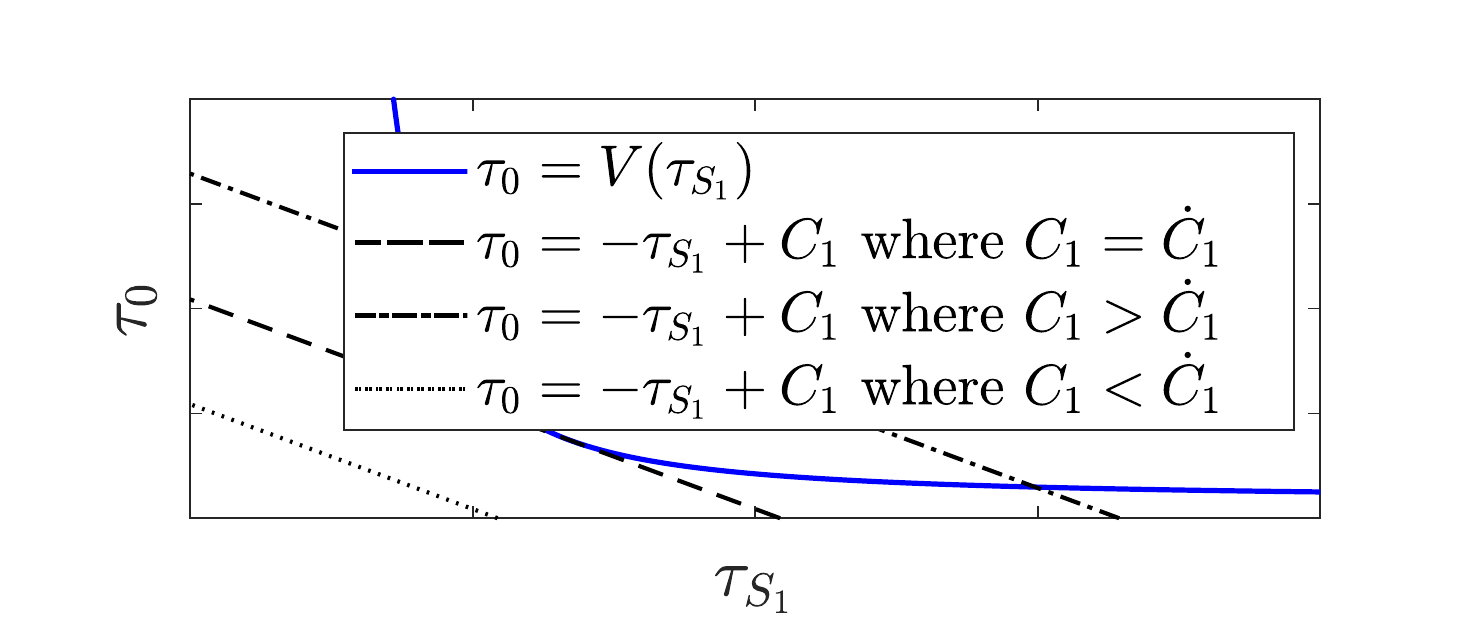}} 
\subfloat[]{ \label{fig:feasreg_pmax}
\includegraphics[width=.5\columnwidth]{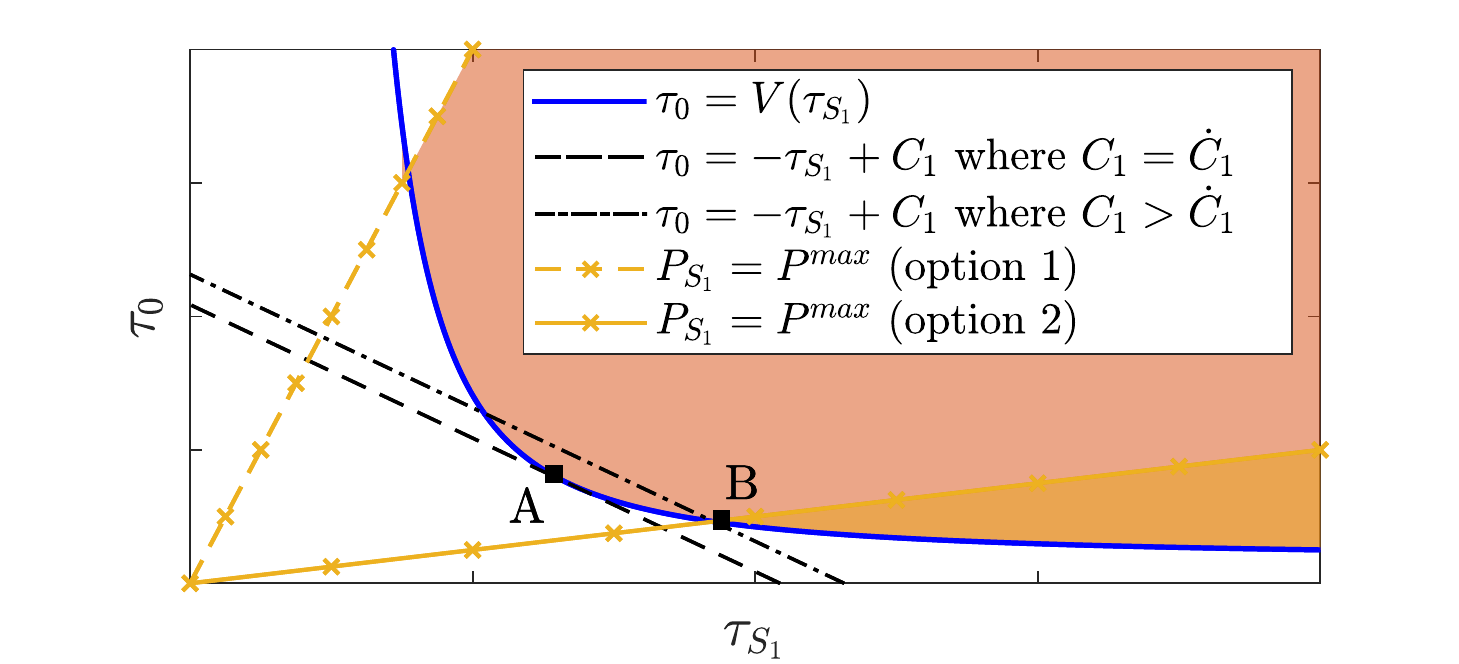}} \\
\caption{(a) Graphical representation of the constraints (\ref{eq:energy1APNS}) and (\ref{eq:data1APNS}) for single source WPCN. (b) Graphical representation of the constraints (\ref{eq:energy1APNS}), (\ref{eq:data1APNS}), and (\ref{eq:pmax1APNS}) for single source WPCN.}
\label{fig:proof_regions}
\end{figure}

\begin{definition} $\ddot{\tau}_{S_1}$ and $\ddot{\tau}_0$ are defined as the values of $\tau_{S_1}$ and $\tau_0$, respectively, in the solution of the system, when constraints (\ref{eq:energy1APNS})-(\ref{eq:pmax1APNS}) hold with equality, and expressed by
\begin{align}
\ddot{\tau}_{S_1} =  \frac{D_{S_1}}{W\log_2\left( 1+\frac{P^{max}g_{S_1}}{WN_0}\right)} \label{eq:ITwoR_pmax} \\
\ddot{\tau}_0 = \frac{P^{max}\ddot{\tau}_{S_1}}{\zeta_1 P_A h_{S_1}}\label{eq:EHwoR_pmax}
\end{align}
\label{def:pmax}
\end{definition}

\begin{theorem}  If the solution $(\dot{\tau}_{S_1}$, $\dot{\tau}_0)$ given in  Eqs.~(\ref{eq:ITwoR}) and (\ref{eq:EHwoR}) satisfies the constraint (\ref{eq:pmax1APNS}) with $P_{S_1} \leq P^{max}$, i.e., $(\dot{\tau}_{S_1}$, $\dot{\tau}_0)$ is feasible, it is optimal to Problem~(\ref{eq:problm1APNS}). Otherwise, the optimal solutions are $\ddot{\tau}_{S_1}$ and $\ddot{\tau}_0$ given in Eqs.~(\ref{eq:ITwoR_pmax}) and (\ref{eq:EHwoR_pmax}), respectively. 
\label{thm:singleSource}\end{theorem}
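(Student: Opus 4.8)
The plan is to eliminate the power variable $P_{S_1}$ and reduce Problem~(\ref{eq:problm1APNS}) to a two-dimensional geometric problem in the $(\tau_{S_1},\tau_0)$ plane, as depicted in Fig.~(\ref{fig:feasreg_pmax}), and then to a one-dimensional convex minimization along the active boundary. First I would observe that, for fixed $(\tau_{S_1},\tau_0)$ with $\tau_{S_1},\tau_0>0$, a feasible $P_{S_1}$ exists if and only if the minimum power dictated by the demand constraint (\ref{eq:data1APNS}), namely $P_{\min}(\tau_{S_1})=\frac{WN_0}{g_{S_1}}\bigl(2^{D_{S_1}/(W\tau_{S_1})}-1\bigr)$, does not exceed both the energy budget $\zeta_1 P_A h_{S_1}\tau_0/\tau_{S_1}$ from (\ref{eq:energy1APNS}) and the cap $P^{max}$ from (\ref{eq:pmax1APNS}). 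The first condition rearranges precisely to $\tau_0\ge V(\tau_{S_1})$ of Lemma~\ref{lemma:f}, whereas the second, $P_{\min}\le P^{max}$, is independent of $\tau_0$ and rearranges to $\tau_{S_1}\ge \ddot{\tau}_{S_1}$ with $\ddot{\tau}_{S_1}$ as in (\ref{eq:ITwoR_pmax}). Hence the feasible region is $\mathcal{F}=\{(\tau_{S_1},\tau_0): \tau_0\ge V(\tau_{S_1}),\ \tau_{S_1}\ge \ddot{\tau}_{S_1}\}$, the epigraph of $V$ truncated by a vertical line.

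Next, since the objective $\tau_0+\tau_{S_1}$ is strictly increasing in $\tau_0$, any optimum must lie on the lower boundary $\tau_0=V(\tau_{S_1})$, so it suffices to minimize the single-variable function $g(\tau_{S_1})\triangleq V(\tau_{S_1})+\tau_{S_1}$ over $\tau_{S_1}\ge \ddot{\tau}_{S_1}$. Because the combined constraint (\ref{eq:combEnergyAndData}) defines a convex set, $V$ is convex, and therefore $g$ is convex; by Lemma~\ref{lemma:woRelSol} its unconstrained minimizer is the tangent abscissa $\dot{\tau}_{S_1}$ of (\ref{eq:ITwoR}), at which the supporting line has slope $-1$, i.e.\ $V'(\dot{\tau}_{S_1})=-1$ and hence $g'(\dot{\tau}_{S_1})=0$.

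I would then split into the two cases of the statement, noting that evaluating (\ref{eq:energy1APNS}) with equality at the tangent point gives $P_{S_1}=\zeta_1 P_A h_{S_1}\dot{\tau}_0/\dot{\tau}_{S_1}$, so the hypothesis that $(\dot{\tau}_{S_1},\dot{\tau}_0)$ satisfies $P_{S_1}\le P^{max}$ is exactly $\dot{\tau}_{S_1}\ge \ddot{\tau}_{S_1}$. In that case the vertical cut is inactive, $\dot{\tau}_{S_1}$ is feasible, and convexity of $g$ makes $(\dot{\tau}_{S_1},\dot{\tau}_0)$ optimal. Otherwise $\dot{\tau}_{S_1}<\ddot{\tau}_{S_1}$; since $V$ is convex with $V'(\dot{\tau}_{S_1})=-1$, we have $g'\ge 0$ on $(\dot{\tau}_{S_1},\infty)$, so $g$ is non-decreasing on $[\ddot{\tau}_{S_1},\infty)$ and attains its minimum at the left endpoint $\tau_{S_1}=\ddot{\tau}_{S_1}$. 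The optimum is then $(\ddot{\tau}_{S_1},V(\ddot{\tau}_{S_1}))$, and a short computation—substituting $D_{S_1}/(W\ddot{\tau}_{S_1})=\log_2\bigl(1+P^{max}g_{S_1}/(WN_0)\bigr)$ into $V$—confirms $V(\ddot{\tau}_{S_1})=\ddot{\tau}_0$ of (\ref{eq:EHwoR_pmax}), matching Definition~\ref{def:pmax}.

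The main obstacle is the reduction step rather than the final convex argument: one must verify carefully that eliminating $P_{S_1}$ leaves exactly the two inequalities above—in particular that the demand-plus-power condition collapses to the $\tau_0$-independent bound $\tau_{S_1}\ge \ddot{\tau}_{S_1}$—and that the truncation point $(\ddot{\tau}_{S_1},\ddot{\tau}_0)$ lies on the curve $\tau_0=V(\tau_{S_1})$, so that the endpoint minimizer coincides with the closed-form solution of Definition~\ref{def:pmax}. Once these identifications are in place, optimality in both cases follows immediately from the convexity of $g$ and the monotonicity of the objective in $\tau_0$.
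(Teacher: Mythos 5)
Your proof is correct and follows essentially the same route as the paper: both reduce the problem to the $(\tau_{S_1},\tau_0)$ plane, identify the feasible region as the epigraph of $V$ truncated by the $P^{max}$ constraint, and split into the two cases according to whether the tangent point $(\dot{\tau}_{S_1},\dot{\tau}_0)$ survives the truncation. The only difference is presentational: the paper argues graphically from Fig.~\ref{fig:feasreg_pmax} and asserts that the optimum in the constrained case sits at the intersection point B, whereas you make the same argument analytic by explicitly projecting out $P_{S_1}$, verifying that the power cap collapses to $\tau_{S_1}\ge\ddot{\tau}_{S_1}$, and invoking convexity of $g(\tau_{S_1})=V(\tau_{S_1})+\tau_{S_1}$ to locate the minimizer at the left endpoint.
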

\begin{proof} We prove the theorem by again using the graphical approach. Fig.~(\ref{fig:feasreg_pmax}) extends the graphical representation in Fig.(~\ref{fig:cons21}) by additionally including the maximum power constraint (\ref{eq:pmax1APNS}). Two possible realizations of (\ref{eq:pmax1APNS}) are depicted by orange lines. The feasible region is the intersection of the area above the blue curve and the area below the orange line. 

Let us consider first that the constraint (\ref{eq:pmax1APNS}) lies as in option 1, where the feasible region is the union of red and orange shaded areas. 
By Lemma~\ref{lemma:woRelSol}, $\dot{\tau}_{S_1}$ and $\dot{\tau}_0$, indicated by point A, minimize the total EH and IT time. Since they also satisfy  $P_{S_1} \leq P^{max}$, as seen in Fig.~\ref{fig:feasreg_pmax}, they are optimal.

Now, let us consider that the constraint (\ref{eq:pmax1APNS}) lies as in option 2, where the feasible region is only the orange shaded area and Point A is not feasible. The optimal solution lies on the boundary of the feasible region \cite{OpResBook} as the problem is convex and the objective is linear. Thus, the intersection of the blue curve and the orange line, point B, minimizes the total schedule length. At point B, the source adjusts its transmit power as $P_{S_1}= P^{max}$, which gives the solution $\ddot{\tau}_{S_1}$ and $\ddot{\tau}_0$ provided in Definition~\ref{def:pmax}.
\end{proof}

\subsection{Multiple Source}

This subsection analyzes a multiple source WPCN, i.e., $N>1$. To solve Problem~(\ref{eq:problm1APNS}) optimally for $N>1$, we transform it into a bi-level optimization problem, in which for a fixed EH duration, the individual IT durations of the sources are solved first, and then, the optimal EH time that minimizes the total schedule length is searched. For the former, we derive the optimal solution for the given EH length; while for the latter, we develop a bi-section search benefiting from the convexity of the total IT times with respect to the EH time. This search obtains a near-optimal solution; and its complexity depends on the gap between the optimal and near-optimal solution.

In the following, we first give the mathematical foundations of this transformation into bi-level optimization in Section~\ref{section:optSol}, and then, present the resulting algorithm in Section~\ref{section:optAlg}. Further, a sub-optimal algorithm based on the solution of the single source case is proposed in Section~\ref{section:suboptAlg} to obtain solutions in a lower time-complexity.

\subsubsection{Bi-level Transformation}
\label{section:optSol}
This subsection provides the mathematical background on the transformation of Problem~(\ref{eq:problm1APNS}) into  bi-level optimization.  The  idea  is to decompose the original  problem into smaller sub-problems, which are then coordinated by a master problem.  For a fixed $\tau_0$, Problem~(\ref{eq:problm1APNS})  is separable into  $N$ \textit{easy} and convex sub-problems regarding each $S_i$ with the aim of minimizing corresponding $\tau_{S_i}$. $\tau_0$ is a variable of master problem and couples the sub-problems. The objective of the master problem is to minimize the sum of $\tau_0$ and the objectives of the sub-problems. The optimal solution to the master problem is optimal for the original Problem~(\ref{eq:problm1APNS}) \cite{Chiang2007}.

 We first formulate the sub-problems and derive the solutions with Lemma~\ref{lemma:solveSubProb}. Theorem~\ref{MultiRecOptSolConv} proves that the objective of the master problem is a convex function of $\tau_0$, which allows us to apply bi-section search on $\tau_0$ for the optimal solution. For this search, the lower and upper bounds on $\tau_0$ are given by Lemmas~\ref{lemma:lb} and \ref{lemma:ub}. 


\textbf{Sub-problems:} All of the sub-problems are in the same reduced form of Problem~(\ref{eq:problm1APNS}) for $N = 1$ and fixed $\tau_0 = \overline{\tau_0} $, which is given as 

\begin{subequations} \label{eq:subproblm1APNS}
\begin{align}
\overline{\tau_{S_1}}(\overline{\tau_0})= \min \qquad &    \tau_{S_1} \label{eq:subprob_obj} \\
\text{s.t.}\qquad & P_{S_1}\tau_{S_1} \leq \zeta_1 P_{A} h_{S_1} \overline{\tau_0} \label{eq:subenergy1APNS}\\
									& \tau_{S_1} W \log_2 \left( 1 + \frac{P_{S_1}g_{S_1}}{WN_0} \right) \geq D_{S_1} \label{eq:subdata1APNS}\\
									& P_{S_1} \leq P^{max} \label{eq:subpmax1APNS}  \\
									& \tau_{S_1}, P_{S_1}   \geq 0 \label{eq:subnonneg1APNS}
\end{align}
\end{subequations}
where $\overline{\tau_{S_1}}(\overline{\tau_0})$ represents the optimal solution. Problem~(\ref{eq:subproblm1APNS}) is non-convex due to the non-convexity of Eqs.~(\ref{eq:subenergy1APNS}) and (\ref{eq:subdata1APNS}). 


\begin{lemma} \label{lemma:solveSubProb} The optimal solution to Problem~(\ref{eq:subproblm1APNS}) is obtained as follows. For $\overline{\tau_0} > \ddot{\tau}_0$, where $\ddot{\tau}_0$ is obtained by Eq.~(\ref{eq:EHwoR_pmax}), the optimal IT duration $\overline{\tau_{S_1}} = \ddot{\tau}_{S_1}$, given by Eq. (\ref{eq:ITwoR_pmax}). Otherwise, $\overline{\tau_{S_1}}$ is derived as a solution of the nonlinear equation
 \begin{equation}
\overline{\tau_0} = V(\overline{\tau_{S_1}}
) = \frac{\overline{\tau_{S_1}}}{\gamma_{S_1}}\left(2^{\frac{D}{W \overline{\tau_{S_1}}}}-1\right)
\label{eq:f_func2}
\end{equation}
\label{lemma:ITforgivenEH}
\end{lemma}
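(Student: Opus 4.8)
The plan is to fix $\overline{\tau_0}$ and eliminate the power variable $P_{S_1}$, reducing Problem~(\ref{eq:subproblm1APNS}) to a one-dimensional minimization over $\tau_{S_1}$ whose feasible region is governed by just two scalar inequalities. First I would note that, since the achievable rate $W\log_2(1+P_{S_1}g_{S_1}/(WN_0))$ is strictly increasing in $P_{S_1}$, the data constraint~(\ref{eq:subdata1APNS}) is equivalent to a lower bound on the power, $P_{S_1}\geq P_{\min}(\tau_{S_1}) \triangleq \frac{WN_0}{g_{S_1}}\big(2^{D_{S_1}/(W\tau_{S_1})}-1\big)$, obtained by inverting the rate expression, whereas the energy constraint~(\ref{eq:subenergy1APNS}) together with the cap~(\ref{eq:subpmax1APNS}) impose the upper bound $P_{S_1}\leq \min\{P^{max},\, \zeta_1 P_A h_{S_1}\overline{\tau_0}/\tau_{S_1}\}$.

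Next I would characterize feasibility. For a fixed $\overline{\tau_0}$, a value $\tau_{S_1}$ is feasible precisely when some power in $[0,P^{max}]$ meeting both bounds exists, i.e. when $P_{\min}(\tau_{S_1})\leq \min\{P^{max},\,\zeta_1 P_A h_{S_1}\overline{\tau_0}/\tau_{S_1}\}$. Splitting this into its two parts, the inequality $P_{\min}(\tau_{S_1}) \leq \zeta_1 P_A h_{S_1}\overline{\tau_0}/\tau_{S_1}$ rearranges, using $\gamma_{S_1}=g_{S_1}\zeta_1 P_A h_{S_1}/(WN_0)$, to $\overline{\tau_0}\geq V(\tau_{S_1})$, while $P_{\min}(\tau_{S_1})\leq P^{max}$ rearranges to $\tau_{S_1}\geq \ddot{\tau}_{S_1}$ with $\ddot{\tau}_{S_1}$ as in Eq.~(\ref{eq:ITwoR_pmax}) (here $P_{\min}$ is itself decreasing in $\tau_{S_1}$, which flips the cap into a lower bound on $\tau_{S_1}$). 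Hence the feasible set is exactly $\{\tau_{S_1}: \overline{\tau_0}\geq V(\tau_{S_1})\ \text{and}\ \tau_{S_1}\geq \ddot{\tau}_{S_1}\}$.

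Then I would minimize. By Lemma~\ref{lemma:f}, $V$ is strictly decreasing, so $\overline{\tau_0}\geq V(\tau_{S_1})$ is equivalent to $\tau_{S_1}\geq V^{-1}(\overline{\tau_0})$; the feasible set is therefore the half-line $\tau_{S_1}\geq \max\{V^{-1}(\overline{\tau_0}),\ \ddot{\tau}_{S_1}\}$, and the minimizer is this maximum. To decide which branch is active I would evaluate $V$ at $\ddot{\tau}_{S_1}$: substituting Eq.~(\ref{eq:ITwoR_pmax}) gives $2^{D_{S_1}/(W\ddot{\tau}_{S_1})}-1 = P^{max}g_{S_1}/(WN_0)$, whence a short computation yields $V(\ddot{\tau}_{S_1})=P^{max}\ddot{\tau}_{S_1}/(\zeta_1 P_A h_{S_1})=\ddot{\tau}_0$, matching Eq.~(\ref{eq:EHwoR_pmax}). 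Since $V$ is decreasing, $V^{-1}(\overline{\tau_0})\geq \ddot{\tau}_{S_1}$ holds iff $\overline{\tau_0}\leq \ddot{\tau}_0$. Thus for $\overline{\tau_0}>\ddot{\tau}_0$ the maximum equals $\ddot{\tau}_{S_1}$, giving $\overline{\tau_{S_1}}=\ddot{\tau}_{S_1}$, and for $\overline{\tau_0}\leq \ddot{\tau}_0$ it equals $V^{-1}(\overline{\tau_0})$, i.e. the root of Eq.~(\ref{eq:f_func2}), as claimed.

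The main obstacle is the elimination step: one must see that projecting out $P_{S_1}$ collapses the three coupled constraints into exactly the two scalar conditions $\overline{\tau_0}\geq V(\tau_{S_1})$ and $\tau_{S_1}\geq\ddot{\tau}_{S_1}$, and that their two boundaries meet at the single point $(\ddot{\tau}_{S_1},\ddot{\tau}_0)$. Once this is established, the strict monotonicity of $V$ from Lemma~\ref{lemma:f} makes the case split and the final minimization immediate, and the remaining manipulations are routine algebra.
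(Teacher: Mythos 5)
Your proof is correct, and it reaches the lemma by a genuinely different route from the paper. The paper's proof introduces an auxiliary variable $\eta$ (the fraction of harvested energy actually spent), folds the energy constraint into the demand constraint to obtain Eq.~(\ref{eq:combEnergyAndData_multi}), argues that the resulting set is convex, and then reads the optimum off a figure in the $(\tau_{S_1},\eta)$ plane: Point~A (intersection of the demand curve with the $P^{max}$ line) when $\overline{\tau_0}>\ddot{\tau}_0$, and Point~B (on $\eta=1$) otherwise. You instead project out $P_{S_1}$ directly: the data constraint becomes the lower bound $P_{S_1}\geq P_{\min}(\tau_{S_1})$, the energy and power-cap constraints become an upper bound, and feasibility of $\tau_{S_1}$ collapses to the two scalar conditions $\overline{\tau_0}\geq V(\tau_{S_1})$ and $\tau_{S_1}\geq\ddot{\tau}_{S_1}$, so the feasible set is a half-line whose left endpoint is the minimizer. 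This buys you a fully analytic, one-dimensional argument that needs only the strict monotonicity of $V$ from Lemma~\ref{lemma:f} (no convexity claim and no appeal to a figure), and your explicit verification that $V(\ddot{\tau}_{S_1})=\ddot{\tau}_0$ makes the case split at $\overline{\tau_0}=\ddot{\tau}_0$ transparent rather than pictorial. The only implicit assumption you share with the paper is that $\overline{\tau_0}$ lies in the range of $V$ (equivalently, that Eq.~(\ref{eq:f_func2}) admits a solution), which the lemma statement itself presupposes, so no gap arises there.
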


\begin{proof} To prove the lemma, we first derive an equivalent convex formulation of Problem~(\ref{eq:subproblm1APNS}), then, solve the resulting problem via a graphical approach.

For the convex reformulation, let us define $\eta$ as the ratio of the consumed energy in UL to the harvested energy in DL. By using $\eta$, we can rewrite Eq.~(\ref{eq:subenergy1APNS}) as $P_{S_1}\tau_{S_1} = \eta\zeta_1 P_{A} h_{S_1} \overline{\tau_0}$ and combine it with Eq.~(\ref{eq:subdata1APNS}) as
\begin{equation}
\tau_{S_1} W \log_2 \left( 1 + \frac{\eta\zeta_1 P_{A} h_{S_1} \overline{\tau_0}g_{S_1}}{\tau_{S_1}WN_0} \right) \geq D_{S_1}
\label{eq:combEnergyAndData_multi}
\end{equation}

Hence, equivalently, we solve the problem with objective Eq.~(\ref{eq:subprob_obj}) and constraints Eqs.~(\ref{eq:combEnergyAndData_multi}), (\ref{eq:subpmax1APNS}) and (\ref{eq:subnonneg1APNS}). Note that Eq.~(\ref{eq:combEnergyAndData_multi}) represents a convex set for fixed $\overline{\tau_0}$. Moreover, the objective function, Eq.~(\ref{eq:subprob_obj}), and Eqs.~(\ref{eq:subpmax1APNS})-(\ref{eq:subnonneg1APNS}) are linear. Therefore, Problem~(\ref{eq:subproblm1APNS}) turns into a non-linear convex problem.

\begin{figure}[!ht]%
\centering
\includegraphics[width=.5\columnwidth]{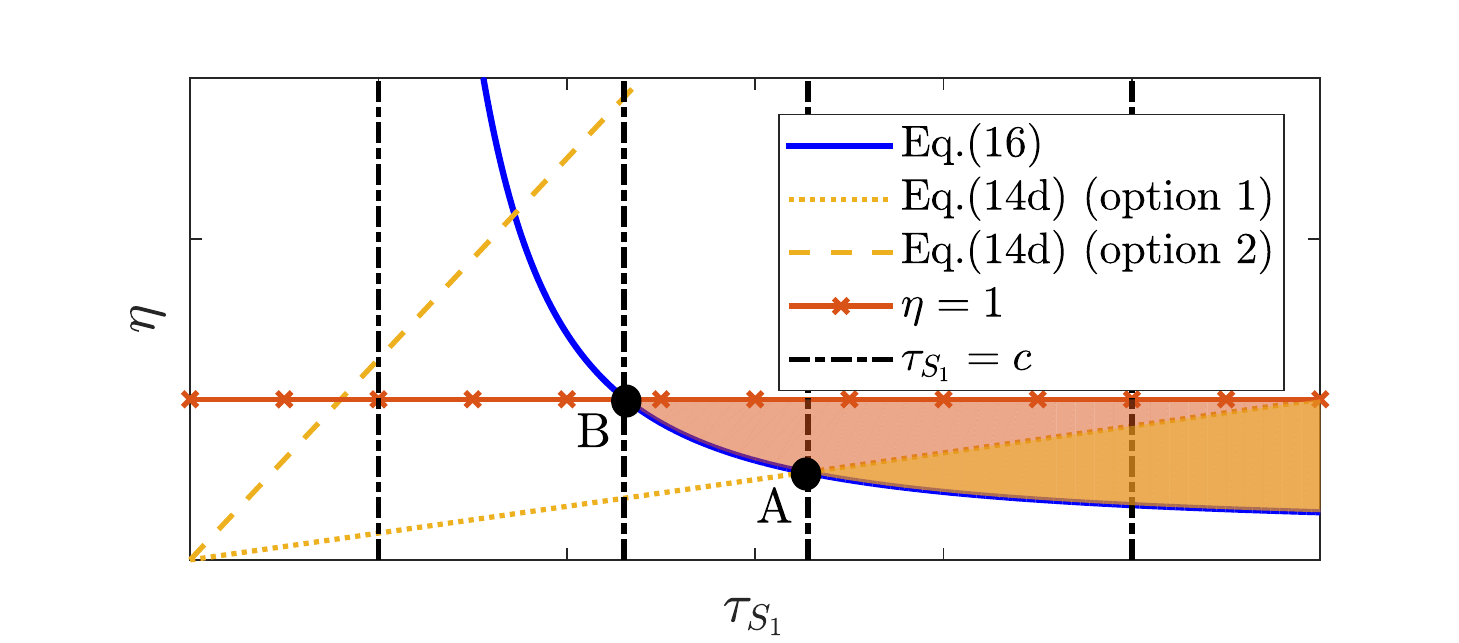}%
\caption{\small Graphical representation of the feasible region for Problem~(\ref{eq:subproblm1APNS}).}
\label{fig:feasibleregion}
\end{figure}

Fig.~\ref{fig:feasibleregion} indicates the feasible region for the problem with the possible values of the objective function $\tau_{S_1}$ (black vertical dotted lines). The blue solid curve represents the left-hand side of the demand constraint Eq.~(\ref{eq:combEnergyAndData_multi}) for fixed $D_{S_1}$ and varying $\eta$ and $\tau_{S_1}$. To satisfy the minimum demand amount $D_{S_1}$, the solution must lie above the curve. On the other hand, due to the energy causality constraint $\eta \leq 1$, a feasible solution must be under the red horizontal dotted line. Depending on maximum UL power constraint (\ref{eq:subpmax1APNS}) (two possibilities are indicated with orange lines) feasible region is either orange area, or both orange and red areas.  

In Fig.~\ref{fig:feasibleregion}, we observe that the feasible region is intersected by  $\tau_{S_1} = c$, where $c$ is a constant,  with minimum $c$ at either Point A or B depending on Eq.~(\ref{eq:subpmax1APNS}).
The optimal solution is at Point A, the intersection of 
Eqs.~(\ref{eq:subpmax1APNS}) and (\ref{eq:combEnergyAndData_multi}), if the feasible region is restricted by Eq.~(\ref{eq:subpmax1APNS}) (option 1). By Definition 1, at Point A, where the source transmits at $P_{max}$,  the transmission necessitates $\ddot{\tau}_{S_1}$ amount of IT and $ \ddot{\tau}_0$ amount of EH duration.  Even though $\tau_0 > \ddot{\tau}_0$, the source transmits at $P_{max}$ by exhausting only a part of the harvested energy to satisfy the constraint  Eq.~(\ref{eq:subpmax1APNS}).

When $\tau_0$ is set to a value that the source cannot harvest enough energy to transmit at $P_{max}$, i.e., $\tau_0 < \ddot{\tau}_0$, Eq.~(\ref{eq:subpmax1APNS}) does not affect the feasible region (option 2). The region is bounded by $\eta =1$, where the source uses all the harvested energy for $D_{S_1}$ amount of data transmission. The optimal solution can be derived by solving Eq.~(\ref{eq:combEnergyAndData_multi}) for $\eta =1$ at equality, which results in the solution of Eq. (\ref{eq:f_func2}).

\end{proof}


\textbf{Master Problem:}
After finding an expression for the individual IT times of the sources for fixed EH length, we now prove that the objective of the master problem, i.e., the total EH and derived IT durations, is convex over EH time, which allows us to search the optimal EH time with bisection method.

\begin{theorem}\label{MultiRecOptSolConv} Let $g({\overline{\tau_{0}}})$ be the objective function of the master problem, i.e., $g({\overline{\tau_{0}}})= \overline{\tau_0} + \sum_i \overline{\tau_{S_i}}(\overline{\tau_{0}})$. $g(\overline{\tau_0})$ is convex over $\overline{\tau_0}$.
\end{theorem}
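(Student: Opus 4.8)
The plan is to exploit the additive structure of $g(\overline{\tau_0}) = \overline{\tau_0} + \sum_i \overline{\tau_{S_i}}(\overline{\tau_0})$. The term $\overline{\tau_0}$ is affine, hence convex, and a sum of convex functions is convex; so the whole argument reduces to showing that each individual IT duration $\overline{\tau_{S_i}}(\overline{\tau_0})$ is convex in $\overline{\tau_0}$. Everything then follows by summation, and I would organize the write-up around this single reduction.

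To handle one $\overline{\tau_{S_i}}(\overline{\tau_0})$, I would invoke its explicit characterization from Lemma~\ref{lemma:ITforgivenEH}: it is \emph{piecewise}, equal to the constant $\ddot{\tau}_{S_i}$ for $\overline{\tau_0} > \ddot{\tau}_0$, and equal to the value solving $\overline{\tau_0} = V(\overline{\tau_{S_i}})$, i.e. $\overline{\tau_{S_i}} = V^{-1}(\overline{\tau_0})$, for $\overline{\tau_0} \le \ddot{\tau}_0$. The constant branch is convex trivially, so the crux is the inverse branch.

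The key step is to strengthen Lemma~\ref{lemma:f}: $V$ is not merely strictly decreasing but also strictly \emph{convex}, and the inverse of a strictly decreasing convex function is again convex. I would establish convexity of $V$ by differentiating the first-derivative expression in Eq.~(\ref{derivativeOfV}) once more. Writing $x = D_{S_1}/(W\tau_{S_1})$ and using $dx/d\tau_{S_1} = -x/\tau_{S_1}$, the second derivative collapses to $V''(\tau_{S_1}) = \frac{(\ln 2)^2\, x^2\, 2^{x}}{\gamma_{S_1}\,\tau_{S_1}}$, which is strictly positive for all $\tau_{S_1}>0$ (the same computation applies verbatim to every source, with $\gamma_{S_i}$ in place of $\gamma_{S_1}$). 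To pass from $V$ to $V^{-1}$ I would use the inverse-function identity $(V^{-1})'' = -V''/(V')^3$: since $V'<0$ by Lemma~\ref{lemma:f} and $V''>0$, the right-hand side is positive, so $V^{-1}$ is convex on its domain.

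Finally I would stitch the two branches together at the junction $\overline{\tau_0}=\ddot{\tau}_0$. Continuity holds because, by Definition~\ref{def:pmax}, $V(\ddot{\tau}_{S_i})=\ddot{\tau}_0$, so both branches take the value $\ddot{\tau}_{S_i}$ there; and since the inverse branch has strictly negative slope while the constant branch is flat, the slope is nondecreasing as $\overline{\tau_0}$ crosses $\ddot{\tau}_0$. Convexity on each piece together with this slope condition yields convexity of $\overline{\tau_{S_i}}$ on the whole domain, and summing over $i$ (plus the affine term) completes the proof. The main obstacle is purely computational and bookkeeping: carrying out the second differentiation of $V$ cleanly and getting the sign in the inverse-function identity and the junction slope comparison exactly right; none of these steps is deep, but the sign analysis must be done carefully.
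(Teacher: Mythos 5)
Your proof is correct, but it takes a genuinely different route from the paper's. The paper also reduces to showing that a single $\overline{\tau_{S_1}}(\overline{\tau_{0}})$ is convex and then sums, but from there it invokes an abstract convexity criterion: it cites \cite{Salik19} for the facts that $\overline{\tau_{S_1}}(\overline{\tau_{0}})$ is proper, has a convex domain, and is sub-differentiable at every point of that domain, and then concludes convexity via Theorem 2.4.1(iii) of \cite{convexVectorAnalysis}. You instead work directly from the explicit piecewise characterization in Lemma~\ref{lemma:ITforgivenEH} (constant branch for $\overline{\tau_0}>\ddot{\tau}_0$, inverse branch $V^{-1}$ otherwise), differentiate Eq.~(\ref{derivativeOfV}) once more to get $V''=\frac{(\ln 2)^2 x^2 2^{x}}{\gamma_{S_1}\tau_{S_1}}>0$ with $x=D_{S_1}/(W\tau_{S_1})$ (which checks out), apply $(V^{-1})''=-V''/(V')^3>0$ using $V'<0$ from Lemma~\ref{lemma:f}, and glue the two pieces at $\overline{\tau_0}=\ddot{\tau}_0$, where continuity follows from Definition~\ref{def:pmax} and the one-sided slopes go from negative to zero, so the subgradient condition at the junction holds. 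What your approach buys is a self-contained, fully explicit argument that does not offload the key verifications to a prior publication or to an external theorem; what the paper's approach buys is brevity and avoidance of the second-derivative and junction bookkeeping (which you correctly flag as the only delicate part, and which you have carried out with the right signs). One small point worth making explicit in a final write-up is the domain: $V$ has the horizontal asymptote $D_{S_1}\ln 2/(\gamma_{S_1}W)$ as $\tau_{S_1}\to\infty$, so $V^{-1}$ (and hence $\overline{\tau_{S_1}}$) is only defined for $\overline{\tau_0}$ above that threshold; this set is an interval, so convexity of the domain — which the paper's cited criterion also needs — is immediate, but it should be stated.
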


\begin{proof}
We show that $\overline{\tau_{S_1}}(\overline{\tau_{0}})$ is convex. Then,  $g(\overline{\tau_0})$ is also convex since $\overline{\tau_{S_i}}(\overline{\tau_{0}})$'s are all in the same form and the convexity is closed under sum. To prove the convexity of the $\overline{\tau_{S_1}}(\overline{\tau_{0}})$, we showed in \cite{Salik19} that (a) $\overline{\tau_{S_1}}(\overline{\tau_{0}})$ is proper, (b) the domain of the function denoted by $dom \overline{\tau_{S_1}}$, i.e., the set of all possible $\overline{\tau_{0}}$, is convex and (c) $\overline{\tau_{S_1}}(\overline{\tau_{0}})$ is sub-differentiable at every $\overline{\tau_{0}} \in dom \overline{\tau_{S_1}}$. Then, by Theorem 2.4.1(iii) in \cite{convexVectorAnalysis}, $\overline{\tau_{S_1}}(\overline{\tau_{0}})$ is convex.
\end{proof}


\begin{lemma}\label{lemma:lb}
Let $\{\hat{\tau}_0^i, \hat{\tau}_{S_i}\}$  denote the solution pair of the single source problem corresponding to $S_i$ by Theorem~\ref{thm:singleSource}, for $i = 1,\ldots,N$. $\max_{i = 1,\ldots,N} \hat{\tau}_0^i$ gives a lower bound for the optimum $\tau_0$ of Problem~(\ref{eq:problm1APNS}).
\end{lemma}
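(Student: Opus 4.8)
The plan is to reduce the statement to a one-dimensional property of the master objective produced by the bi-level transformation. Recall that, by the bi-level reformulation together with Theorem~\ref{MultiRecOptSolConv}, the optimum $\tau_0$ of Problem~(\ref{eq:problm1APNS}) is precisely the minimizer $\tau_0^*$ of the convex function $g(\overline{\tau_0}) = \overline{\tau_0} + \sum_{i=1}^N \overline{\tau_{S_i}}(\overline{\tau_0})$. For each source $S_i$, I would introduce the single-source objective $g_i(\overline{\tau_0}) = \overline{\tau_0} + \overline{\tau_{S_i}}(\overline{\tau_0})$; by Theorem~\ref{thm:singleSource} its minimizer is exactly $\hat{\tau}_0^i$. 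It then suffices to prove that the minimizer of the aggregate objective dominates each single-source minimizer, i.e. $\tau_0^* \geq \hat{\tau}_0^i$ for every $i$, which immediately yields $\tau_0^* \geq \max_i \hat{\tau}_0^i$. Note that a plain feasibility argument only bounds the sum $\tau_0 + \tau_{S_i}$ from below, not $\tau_0$ alone, so the convexity of the per-source curves must be exploited.

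The key monotonicity facts I would establish first are: (i) every $\overline{\tau_{S_i}}(\overline{\tau_0})$ is non-increasing in $\overline{\tau_0}$, which follows from Lemma~\ref{lemma:f}, since on the region $\overline{\tau_0} \leq \ddot{\tau}_0^i$ the map $\overline{\tau_{S_i}}$ is the inverse of the strictly decreasing function $V$, and it is constant beyond $\ddot{\tau}_0^i$ by Lemma~\ref{lemma:ITforgivenEH}; and (ii) on the interval $(0, \hat{\tau}_0^i)$ the single-source objective $g_i$ is \emph{strictly} decreasing. For (ii) I would show $V$ is strictly convex by verifying that the numerator of $V'$ in Eq.~(\ref{derivativeOfV}) is strictly increasing toward $0$ as $\overline{\tau_{S_i}} \to \infty$; consequently its inverse $\overline{\tau_{S_i}} = V^{-1}$ is strictly convex with a strictly increasing slope that equals $-1$ exactly at the unconstrained tangent point of Lemma~\ref{lemma:woRelSol}. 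Since the single-source optimum always satisfies $\hat{\tau}_0^i \leq \ddot{\tau}_0^i$ (in the power-limited case the optimum sits at the kink $\ddot{\tau}_0^i$, otherwise strictly before it), the whole interval $(0, \hat{\tau}_0^i)$ lies on the smooth branch, where $\frac{d}{d\overline{\tau_0}}\overline{\tau_{S_i}}(\overline{\tau_0}) < -1$ and hence $g_i'(\overline{\tau_0}) < 0$.

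Armed with these, the final step is a perturbation argument. Suppose for contradiction that $\tau_0^* < \hat{\tau}_0^{i_0}$ for some index $i_0$. Evaluating the right derivative of $g$ at $\tau_0^*$ gives $g'(\tau_0^*) = 1 + \frac{d}{d\overline{\tau_0}}\overline{\tau_{S_{i_0}}}(\tau_0^*) + \sum_{k \neq i_0} \frac{d}{d\overline{\tau_0}}\overline{\tau_{S_k}}(\tau_0^*)$, where the $i_0$-term is strictly below $-1$ by (ii) and every remaining term is non-positive by (i), so $g'(\tau_0^*) < 0$. Increasing $\overline{\tau_0}$ slightly therefore strictly decreases $g$, contradicting the optimality of $\tau_0^*$. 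Hence $\tau_0^* \geq \hat{\tau}_0^i$ for all $i$ and the claimed lower bound follows. The main obstacle I anticipate is part (ii): proving the strict convexity of $V$ (equivalently, the strict monotonicity of the slope of $\overline{\tau_{S_i}}$) and carefully locating $\hat{\tau}_0^i$ relative to the kink $\ddot{\tau}_0^i$, so that the perturbation stays on the differentiable branch and the one-sided derivative computation remains valid even when some other source $S_k$ happens to have $\tau_0^* = \ddot{\tau}_0^k$.
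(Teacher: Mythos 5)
Your proof is correct and follows essentially the same route as the paper's: decompose $g(\overline{\tau_0})$ into the single-source objective $\overline{\tau_0}+\overline{\tau_{S_{i_0}}}(\overline{\tau_0})$ plus the remaining IT times, then combine the monotonicity of each $\overline{\tau_{S_k}}(\overline{\tau_0})$ (Lemma~\ref{lemma:f} plus the inverse function theorem) with the single-source optimality of $\hat{\tau}_0^{i_0}$ from Theorem~\ref{thm:singleSource}. The only difference is that you argue via derivative signs, which forces you to establish strict convexity of $V$ and to handle the kink at $\ddot{\tau}_0^k$, whereas the paper simply compares function values --- for $\tau_0<\hat{\tau}_0^{i_0}$ one has $\tau_0+\overline{\tau_{S_{i_0}}}(\tau_0)\geq \hat{\tau}_0^{i_0}+\hat{\tau}_{S_{i_0}}$ and $\overline{\tau_{S_k}}(\tau_0)\geq\overline{\tau_{S_k}}(\hat{\tau}_0^{i_0})$ --- which yields the lower bound without any differentiability or strictness considerations.
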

\begin{proof}
To prove the statement, we discuss that further decrease of $\tau_0$ below $\max_{i = 1,\ldots,N} \hat{\tau}_0^i$ would increase the total schedule length. Without loss of generality, suppose  $\hat{\tau}_0^1 \geq \hat{\tau}_0^i, \forall i, i\neq 1$.

 Let us set $\tau_0$ to $\hat{\tau}_0^1$. The pair of $\{\hat{\tau}_0^1, \hat{\tau}_{S_1}\}$ minimizes $\tau_0 +\tau_{S_1}$ by Theorem~\ref{thm:singleSource}.  For a fixed $\tau_0 = \hat{\tau}_0^1$, optimal transmission time of source $i$ is denoted by $\hat{\tau}_{S_i}(\hat{\tau}_0^1)$.
 By Lemma~\ref{lemma:f}, $\tau_0$ is a decreasing function of $\tau_{S_i}$. Then, by inverse function theorem \cite{Trench2013}, $\tau_{S_i}(\tau_0)$ is a decreasing function of $\tau_0$. Intuitively, with more harvested energy, sources can complete transmission in shorter time. Hence, $\tau_{S_i}(\hat{\tau}_0^1) \leq \hat{\tau}_{S_i} = \tau_{S_i}(\hat{\tau}_0^i), \forall i$ since $\hat{\tau}_0^1 \geq \hat{\tau}_0^i$. 
The objective is minimizing the total schedule length  $\tau_0 + \sum_{i=1}^{N} \tau_{S_i}$. If $\tau_0 < \hat{\tau}_0^1$ is picked, the term  $\tau_0 +\tau_{S_1}$ and all $\tau_{S_i}$s, so, the total schedule length increases. If $\tau_0 > \hat{\tau}_0^1$ is picked, despite the increase in the term  $\tau_0 +\tau_{S_1}$, $\tau_{S_i}$s $\forall i, i \neq 1$ decrease, which may result in a decrease of the total schedule length.  Hence, $\max_{i = 1,\ldots,N} \hat{\tau}_0^i$ gives a lower bound on optimal $\tau_0$.
\end{proof}

\begin{lemma}\label{lemma:ub}
Let $\ddot{\tau}_0^i$ be the solution of single source network when constraints (\ref{eq:energy1APNS})-(\ref{eq:pmax1APNS}) hold with equality with the only source $S_i$, for $i = 1,\ldots,N$.  Define $\ddot{\tau}_0^{max} = \max_{i \in \{1,\ldots,N\}}\{\ddot{\tau}_0^i\} $. Then, $\ddot{\tau}_0^{max}$ is the upper bound for the optimum $\tau_0$ of Problem~(\ref{eq:problm1APNS}). 
\end{lemma}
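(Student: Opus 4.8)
The plan is to exploit the power-saturation behaviour established in Lemma~\ref{lemma:solveSubProb} together with the convexity of the master objective from Theorem~\ref{MultiRecOptSolConv}. The key observation is that once $\tau_0$ exceeds $\ddot{\tau}_0^{max}$, every source is simultaneously power-saturated, so the master objective becomes strictly increasing and no optimum can lie beyond that point.

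First I would fix an arbitrary $\overline{\tau_0} \geq \ddot{\tau}_0^{max}$. Since $\ddot{\tau}_0^{max} = \max_{i}\ddot{\tau}_0^i$ by definition, the inequality $\overline{\tau_0} \geq \ddot{\tau}_0^i$ holds for every $i = 1,\ldots,N$ \emph{simultaneously}. Applying Lemma~\ref{lemma:solveSubProb} to each single-source sub-problem then yields $\overline{\tau_{S_i}}(\overline{\tau_0}) = \ddot{\tau}_{S_i}$ for all $i$, because in this regime source $S_i$ transmits at $P^{max}$ and its optimal IT duration no longer depends on the harvested energy, hence no longer depends on $\overline{\tau_0}$.

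Next I would substitute these constant IT durations into the master objective $g(\overline{\tau_0}) = \overline{\tau_0} + \sum_i \overline{\tau_{S_i}}(\overline{\tau_0})$ of Theorem~\ref{MultiRecOptSolConv}. On the interval $[\ddot{\tau}_0^{max},\infty)$ this collapses to $g(\overline{\tau_0}) = \overline{\tau_0} + \sum_i \ddot{\tau}_{S_i}$, whose summation term is a constant independent of $\overline{\tau_0}$. Consequently $g$ is strictly increasing, with unit slope, for every $\overline{\tau_0} \geq \ddot{\tau}_0^{max}$. Therefore any $\tau_0 > \ddot{\tau}_0^{max}$ is strictly dominated by $\tau_0 = \ddot{\tau}_0^{max}$, so the minimizer of $g$ cannot lie strictly above $\ddot{\tau}_0^{max}$. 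Hence the optimal $\tau_0$ of Problem~(\ref{eq:problm1APNS}) satisfies $\tau_0 \leq \ddot{\tau}_0^{max}$, which is the desired upper bound.

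The step I expect to require the most care is the simultaneous invocation of Lemma~\ref{lemma:solveSubProb}: I must confirm that the threshold $\ddot{\tau}_0$ appearing in that lemma coincides exactly with the single-source equality solution $\ddot{\tau}_0^i$ defined here via Definition~\ref{def:pmax}, so that the condition $\overline{\tau_0} \geq \ddot{\tau}_0^{max}$ genuinely places \emph{every} source in the saturated branch at once rather than only the maximizing one. The convexity granted by Theorem~\ref{MultiRecOptSolConv} is a convenient safeguard, but given the explicit linear form of $g$ on $[\ddot{\tau}_0^{max},\infty)$ the monotonicity argument by itself already suffices to rule out optimizers beyond the bound.
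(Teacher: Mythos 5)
Your proof is correct and follows essentially the same route as the paper's: invoke Lemma~\ref{lemma:solveSubProb} to show every $\overline{\tau_{S_i}}$ saturates at the constant $\ddot{\tau}_{S_i}$ once $\overline{\tau_0} \geq \ddot{\tau}_0^{max}$, so the master objective reduces to $\overline{\tau_0}$ plus a constant and is strictly increasing there. Your explicit check that the threshold in Lemma~\ref{lemma:solveSubProb} coincides with the $\ddot{\tau}_0^i$ of Definition~\ref{def:pmax} is a welcome bit of extra rigor, but the argument is the same one the paper gives.
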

\begin{proof}
When $\overline{\tau_{0}} > \ddot{\tau}_0^i$, $\tau_{S_i} $ is constant for all $i$, i.e., $\tau_{S_i} =\ddot{\tau}_{S_i}$, by Lemma~\ref{lemma:solveSubProb}. Then, further increase of $\overline{\tau_{0}}$  greater than  $\ddot{\tau}_0^{max}$ does not improve  
$\tau_{S_i}$ for any $i$. Since the objective function is minimizing  total EH and IT times  $\tau_0 + \sum_i \tau_{S_i}$,  $\ddot{\tau}_0^{max}$ gives the upper bound on the optimal $\tau_0$. 
\end{proof}

\subsubsection{Optimal Algorithm} \label{section:optAlg} 

\begin{algorithm}[H]
	\caption{\small \textbf{Power Constrained Multiple User Time Minimization Algorithm (POWMU)}}\label{alg:mrttma}
	\begin{algorithmic}[1]
		\small\FORALL{ $i =1,\ldots,N$}
		\small\STATE Compute $\dot{\tau}_{S_i}$ and $\dot{\tau}_0^i$  by Eqs.~(\ref{eq:ITwoR}) and (\ref{eq:EHwoR}) for source $S_i$.
	\small\STATE Compute $\ddot{\tau}_0^i$ by Eq.~(\ref{eq:EHwoR_pmax}) for source $S_i$. 
        \small\STATE \textbf{if } {$\zeta_i P_A h_{S_i} \dot{\tau}_0^i / \dot{\tau}_{S_i} \leq P_{max}$}, \textbf{then } $\hat{\tau}_0^i \leftarrow \dot{\tau}_0^i $;
        \small\STATE \textbf{else }  $\hat{\tau}_0^i \leftarrow \ddot{\tau}_0^i $.
		\small\ENDFOR
		\small\STATE $lb \leftarrow \max_{i \in \{1,\ldots,N\}} \hat{\tau}_0^i$
		\small\STATE $ub \leftarrow \max_{i \in \{1,\ldots,N\}} \ddot{\tau}_0^i$
		\small\WHILE {$ub - lb > 2\epsilon$}		
		\small\STATE $\tau_{0}^{'} = \frac{lb + ub}{2}$		
		\small\STATE \textbf{if }  $\frac{dg(\overline{\tau_0})}{d(\overline{\tau_0})}\rvert_{\overline{\tau_0} = \tau_{0}^{'}} \geq 0$,  \textbf{then } $ub \leftarrow \tau_{0}^{'}$.
		\small\STATE \textbf{if }  $\frac{dg(\overline{\tau_0})}{d(\overline{\tau_0})}\rvert_{\overline{\tau_0} = \tau_{0}^{'}} \leq 0$, \textbf{then }  $lb \leftarrow \tau_{0}^{'}$.
		\small\ENDWHILE
		\small\STATE Evaluate and return $g({\overline{\tau_{0}}})$ at $\tau_{0}^{'}$
	\end{algorithmic}
\end{algorithm}
Based on the bi-level transformation, POWMU, given in Algorithm~\ref{alg:mrttma}, obtains the optimal solution as follows. The for loop (Lines 1-6) computes $\hat{\tau}_0^i$ and $\ddot{\tau}_0^i$ for all $i=1,\ldots,N$ by following Theorem~\ref{thm:singleSource} and Definition~\ref{def:pmax}, respectively. Then, the lower ($lb$) and upper ($ub$) bounds on $\tau_0$ are set as maximum of  $\hat{\tau}_0^i$'s and  $\ddot{\tau}_0^i$'s  based on Lemmas~\ref{lemma:lb}  and \ref{lemma:ub}, respectively  (Lines 7-8). The algorithm continues with the bi-section search on $\tau_0$. In each iteration, the current solution, $\tau_0 ^{'}$, is set to the mean of the $lb$ and $ub$ (Line 10). If the first derivative of $g(\overline{\tau_0})$ calculated at $\tau_0 ^{'}$ is non-negative, meaning that the function increases at that point, then the $ub$ is set to $\tau_0 ^{'}$ (Line 11). If the derivative is non-positive, the function is decreasing at that point, the $lb$ is  set to $\tau_0 ^{'}$ (Line 12). This procedure continues until the difference between $ub$ and $lb$ is lower than $2\epsilon$ (Line 9). The last computed $ \tau_0 ^{'} $ is set as the optimal EH time and the objective of the master problem, $ g(\overline{\tau_0})$, is evaluated at $\tau_0 ^{'}$ and returned (line 14).

%

\begin{remark}\label{BisectionComplexity} Given the initial $lb$ and $ub$, by bisection search method, to obtain an optimal solution in the error bound, $\epsilon$, the number of iterations, $k$, should be at least as large as $log_2(\frac{ub-lb}{\epsilon})$ \cite{Salik19}.\end{remark}

The overall complexity of POWMU is $\mathcal{O}(Nlog_2(\frac{ub-lb}{\epsilon}) )$, with the derivative calculations, $\mathcal{O}(N)$, at each iteration and by Remark \ref{BisectionComplexity}.

\subsubsection{Sub-optimal Algorithm}\label{section:suboptAlg} In this subsection, we present a sub-optimal algorithm to obtain solutions in lower time-complexity. The sub-optimal algorithm MAX-EH, given in Algorithm~\ref{alg:maxEH}, provides a solution to the multiple source problem by exploiting the optimality conditions of single source network. After individual optimum schedule lengths of the sources are obtained, the EH duration of the entire network is set to the maximum of the individual EH durations.  

The algorithm is described as follows. Individual optimal EH length of each source $S_i$, expressed as $\hat{\tau}_0^i$, is calculated  by Theorem~\ref{thm:singleSource} (Lines 1-6). Maximum of these solutions, denoted by $\hat{\tau}_{0}^{max}$, is computed (Line 7). Then,  IT durations, $\tau_{S_i}$, for fixed  $\tau_0 =\hat{\tau}_{0}^{max}$, are obtained for all $i$ by following Lemma~\ref{lemma:ITforgivenEH} (Lines 8-11). The objective function value is computed as a sum of $\hat{\tau}_{0}^{max}$ and $\tau_{S_i}$s and returned (Line 12).

The time complexity of the algorithm is $\mathcal{O}(N)$ as obtaining $\hat{\tau}_0^i$s and finding the maximum require $N$ iterations.  

\begin{remark}\label{rem:maxeh}
The solution obtained by MAX-EH provides a tight upper bound on the optimal solution of   Problem~(\ref{eq:problm1APNS}) for moderate and high AWGN levels.  We show the tightness of this upper bound by numerical simulations. The detailed simulation parameters are given in Section~\ref{section:simRes}. The optimality gap of the algorithm, versus varying number of sources ($N$) and AWGN spectral density ($N_0$), is depicted in Fig.~\ref{fig:optgap}.  In general, the gap is lower than $5\%$; whereas it is lower for the small number of sources and 0 for single source network as expected. For $N_0 \geq -90$ dBm, moderate and high noise level, the optimality gap is lower than $0.2\%$. This is because the prolonged IT durations due to the higher noise level become dominant in the total schedule length. 
\begin{figure}[!ht]%
\centering
\includegraphics[width=.4\columnwidth]{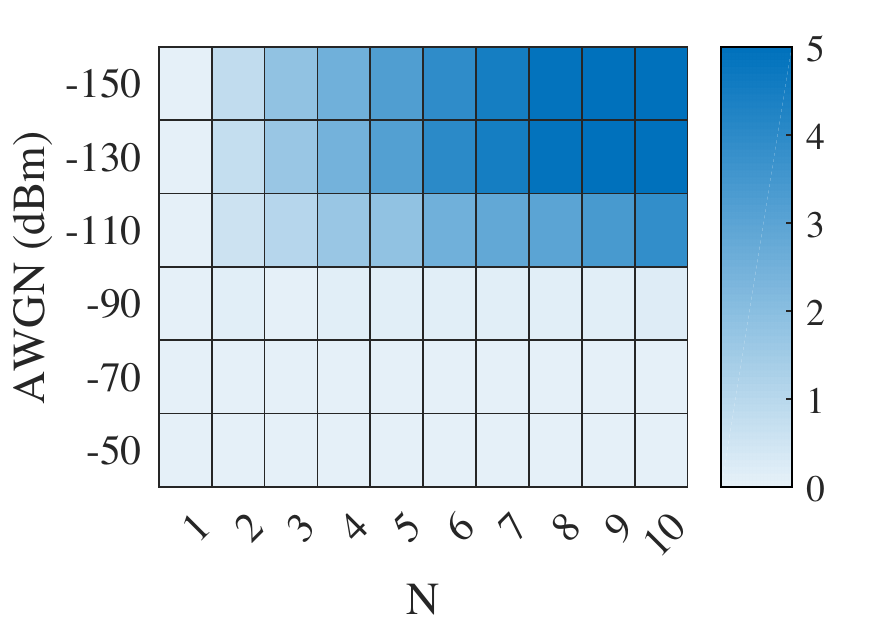}%
\caption{Optimality gap ($\%$) of MAH-EH for varying number of sources ($N$) and AWGN power spectral density ($N_0$)}%
\label{fig:optgap}%
\end{figure}
\end{remark}

\begin{algorithm}[H]
	\caption{\small \textbf{MAX-EH}}\label{alg:maxEH}
	\begin{algorithmic}[1]
        \small\FORALL{ $i =1,\ldots,N$}
		\small\STATE Compute $\dot{\tau}_{S_i}$ and $\dot{\tau}_0^i$  by Eqs.~(\ref{eq:ITwoR}) and (\ref{eq:EHwoR}) for source $S_i$.
    	\small\STATE Compute $\ddot{\tau}_0^i$ by Eq.~(\ref{eq:EHwoR_pmax}) for source $S_i$. 
        \small\STATE \textbf{if } {$\zeta_i P_A h_{S_i} \dot{\tau}_0^i / \dot{\tau}_{S_i} \leq P_{max}$}, \textbf{then } $\hat{\tau}_0^i \leftarrow \dot{\tau}_0^i $;
        \small\STATE \textbf{else }  $\hat{\tau}_0^i \leftarrow \ddot{\tau}_0^i $.
		\small\ENDFOR
		\small\STATE $\hat{\tau}_{0}^{max} = \max_{i = 1,..., N}\{\hat{\tau}_0^i\}$ 
		\small\FORALL{$i =1,\ldots,N$}
		 \small\STATE \textbf{if } $\max_{i = 1,..., N}\{\hat{\tau}_0^i\} > \ddot{\tau}_0^i$, \textbf{then } $\tau_{S_i} \leftarrow \ddot{\tau}_{S_i}$ by Eq.~(\ref{eq:ITwoR_pmax})
		 \small\STATE \textbf{else } Obtain $\tau_{S_i}$ by Eq.~(\ref{eq:f_func2}).
		\small\ENDFOR
		\small\STATE Compute and return: $ \hat{\tau}_{0}^{max} + \sum_{i=1}^{N} {\tau_{S_i}}$.
	\end{algorithmic}
\end{algorithm}
\section{Branch and Bound Based Algorithm} \label{section:relsel}
In this section, we develop a Branch-and-Bound based Algorithm (BBA) for the optimal solution of Problem~(\ref{eq:relSel}). Branch-and-Bound (BB) is a tree based search algorithm and BB based techniques are commonly used for the solution of non-convex MINLPs \cite{PietroBelotti2012}.  They systematically explore the solution space by dividing it into smaller sub-spaces, i.e., branching. The entire solution space is represented by the root of the BB-tree and   each resulting sub-space after partitioning is represented by a BB-tree node. Before creating new branches, the current BB-tree node is checked against lower and upper bounds on the optimal solution and is discarded if a better solution than the incumbent solution cannot be obtained, i.e, pruning.  We adapt BB for our problem by developing problem specific lower and upper bound generation methods and integrating it with POWMU.  

BBA starts with the MINLP formulation given in Problem~(\ref{eq:relSel}). The relaxation of the problem is obtained and solved. If $b_i^j$s  in the solution of the relaxed problem are fractional, BBA continues with branching. A source is selected by $i' = \argmax_{i,j, i\in \{1,\ldots,N\}, j\in \{0,\ldots,K\}}b_i^j$.  Then, branching is performed by assigning one of the relays $R_j, j=0,\ldots,K$ to  the $i^{'th}$ source at each new branch. As a result, $K+1$ new nodes are created, where each new node corresponds to the $j^{th}$, $j=0,\ldots, K$ , relay  selected for the $i^{'th}$ source. This selection is forced by setting $ b_{i'}^j$ to 1 if the $j^{th}$ relay is selected, and 0 otherwise. It is possible to continue branching on other variables, however, this is not necessary as after relays are selected, the problem can be solved by POWMU. 

During the BB-search, the solution of the relaxed problem is used in the branching decision and the objective value of the relaxed problem is considered as a lower bound on the objective of the original problem. We derive the relaxed problem and the lower bound of Problem~(\ref{eq:relSel}) in Section~\ref{section:rp}. Moreover, an incumbent solution is stored. To update the incumbent solution, in each node to be branched, an upper bound is calculated. The upper bound can be any feasible solution in the sub-space. If the upper bound of a node is lower than the incumbent, then it is set as the incumbent. We present the upper bound generation for Problem~(\ref{eq:relSel}) in Section \ref{section:ub}.

A BB-tree node is pruned, if the further search of the corresponding sub-space is not necessary. We prune a node if a feasible solution cannot be found at that node; the lower bound of the node is greater than the incumbent solution, as the node cannot provide a better solution than the current one; or all $b_i^j$s of the solution are integers, as we call POWMU to obtain the minimum schedule length for the determined relay selection, and this is the optimum solution in the corresponding sub-space. The algorithm continues until all BB-nodes are pruned.


\subsection{Lower Bound Generation} \label{section:rp}
A convex relaxation of the sub-problem is solved to optimality to find a lower bound on the objective. The relaxation of Problem~(\ref{eq:relSel}) is obtained as follows. 
\begin{itemize}
\item  The integrality constraint Eq.~(\ref{eq:relSel:integrality}) of Problem~(\ref{eq:relSel}) is relaxed as $0 \leq b_i^j \leq 1$.
\item In Problem~(\ref{eq:relSel}), $P_{S_i}^{R_j} \tau_{S_i}^{R_j}$ and $P_{R_j}^{AP} \tau_{R_j}^{AP}$ are replaced by new variables $A_{S_i}^{R_j}$ and $A_{R_j}^{AP}$, respectively, so that constraints (\ref{eq:relSel:energy1})-(\ref{eq:relSel:demand2}) represent convex sets.

\item After the replacement,  Eqs.~(\ref{eq:relSel:maxpow1}) and (\ref{eq:relSel:maxpow2}) include a product of two variables (e.g., $b_i^j \tau_{S_i}^{R_j}$ and $b_i^j \tau_{R_j}^{AP}$), which causes the non-convexity. Instead of these products, we define their convex/concave envelopes, lower and upper bounding constraints, which is a common relaxation technique proposed in \cite{PietroBelotti2012}.

\end{itemize}
Now, the relaxed problem can be solved by a convex optimization tool to find the lower bound, e.g., CVX \cite{cvx}. 

\subsection{Upper Bound Generation}\label{section:ub}
For each node, any feasible solution to the problem in the corresponding sub-space can be an upper bound. The initial upper bound, also the initial incumbent solution, is obtained by assigning all sources directly to AP, which is the simplest feasible solution. In other BB-tree nodes, we obtain the upper bound as follows. The relaxed problem is already solved for lower bound generation and the resulting $b_i^j$s are known. For each source $S_i$, the relay with the maximum $b_i^j$ is selected, i.e., $R_ j =\argmax_ {j\in\{0,1,\ldots,K\}} b_i^j, \; \; i =1, \ldots, N$. Then, POWMU acquires a feasible solution resulting in the minimum schedule length for the determined relay selection.

 \section{Branch-and-Bound Based Heuristics} \label{section:sBB2}
BBA searches for all possible relay selection options, $K^N$ nodes, in the worst case, which results in an exponential run-time. We now propose the BB-based heuristic algorithms for better time complexity. 

\subsubsection{One Branch Heuristic (OBH)} OBH explores only one branch of the Branch-and-Bound tree. It starts by obtaining the relaxation of Problem~(\ref{eq:relSel}), as described in Section~\ref{section:rp}.  The relaxed problem is solved to obtain initial fractional values of  $b_i^j$s.  Among the fractional $b_i^j$s,  the maximum one is selected and it is forced to be 1 in the next iteration.  OBH repeatedly solves the relaxed problem containing $b_i^j$ values set to 1 in the previous iterations, finds the maximum of the fractional $b_i^j$ in the optimal solution of the resulting problem and forces it to be 1. This continues until the relay selection of all the source nodes. Finally, POWMU provides the solution for the resulting relay selection. 

In OBH, $N-1$ relaxed problems are solved and POWMU is called only once. The complexity of solving the relaxed problem depends on the solver and is ambiguous. It is highly dependent on the number of variables, so rapidly increases with increasing number of sources or relays. Let us denote this complexity with $\mathcal{O}(C)$. The complexity of POWMU is given in Section~\ref{section:optAlg}. Thus, the overall complexity is $\mathcal{O}((N+1)C + Nlog_2(\frac{ub-lb}{\epsilon}))$
\subsubsection{Relaxed Problem Based Heuristic (RPH)}: RPH considers only the relaxed problem and does not perform branching. It obtains and solves the relaxation of Problem~(\ref{eq:relSel}) as in Section~\ref{section:rp}. This solution gives fractional $b_i^j$ values. Then, the relay $R_j$ with  $\max_{j\in \{0,\ldots,K\}} b_i^j$  is assigned to each source $S_i$, for $i =1, \ldots, N$.  After relay selection, the total EH and IT times is obtained by POWMU. Note that this is the same approach that we use to determine upper bounds of each node in BBA. 

RPH solves only one relaxed problem and calls  POWMU once. The overall complexity is $\mathcal{O}(C + Nlog_2(\frac{ub-lb}{\epsilon}))$

\section{Relay Criterion Based Heuristic Algorithm} \label{seciton:relCritHeur}

The ultimate aim of this section is to propose a heuristic algorithm of lower complexity for the solution of Problem~(\ref{eq:relSel}). Although the BB-based heuristic approaches have polynomial time-complexity, their runtime rapidly increases with the number of variables as they must solve the relaxed problem. For large size networks, the need for another heuristic approach with lower time-complexity arises. 

We first analyze the simpler problem instances to derive a CSI-based relay selection condition of three node WPCCN in Section~\ref{section:analyze}. In Section~\ref{section:RSTTMA}, we extend the condition for a single source multiple relay network for relay selection, and then, present the heuristic algorithm, which initializes relay assignment based on this condition and iteratively updates the relay assignment as long as there is an improvement in the schedule length. 


\subsection{Optimality Conditions}\label{section:analyze}
In three node WPCCN, the condition on whether a relay assistance is beneficial or not is established by comparing the total schedule lengths of cases where IT is performed with and without a relay. Obviously, the network benefits from the relay if the total schedule length with relay assistance is lower than that of direct source-to-AP communication.

For the following Lemma~\ref{lemma:dec} and Claim~\ref{lemma:relCond}, we assume that $P^{max}$ is sufficiently high so that the power constraints Eqs.~(\ref{eq:relSel:maxpow1}) and (\ref{eq:relSel:maxpow2}) are not effective to ease the mathematical comparisons. Later, we will discuss the effect of $P^{max}$ over an example scenario.

\begin{lemma}  \label{lemma:dec} Let $\{\hat{\tau}_0^i, \hat{\tau}_{S_i}^{AP}\}$  be the solution pair of the single source problem corresponding to $S_i$ for direct transmission to AP by Theorem~\ref{thm:singleSource} for $i = 1,\ldots,N$. $\hat{\tau}_0^i$ and $\hat{\tau}_{S_i}^{AP}$ are decreasing functions of $g_{S_i}^{AP}h_{AP}^{S_i}$ \label{lemma:EHandGamma}. 
\end{lemma}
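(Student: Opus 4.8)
The plan is to reduce the statement to a monotonicity question in the single scalar $\gamma_{S_i}=g_{S_i}^{AP}\zeta_i P_A h_{AP}^{S_i}/(WN_0)$. Under the standing assumption that $P^{max}$ is large enough for the maximum-power constraints to be inactive, Theorem~\ref{thm:singleSource} gives $\hat{\tau}_{S_i}^{AP}=\dot{\tau}_{S_i}$ and $\hat{\tau}_0^i=\dot{\tau}_0^i$, the unconstrained tangent-point solution of Eqs.~(\ref{eq:ITwoR}) and (\ref{eq:EHwoR}). Both of these depend on the channel gains only through $\gamma_{S_i}$, which for direct transmission is a fixed positive multiple of the product $g_{S_i}^{AP}h_{AP}^{S_i}$. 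Hence it suffices to prove that $\dot{\tau}_{S_i}$ and $\dot{\tau}_0^i$ are decreasing in $\gamma_{S_i}$; from here I drop the index and write $\gamma=\gamma_{S_i}$, $\alpha=\alpha_{S_i}$.

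First I would record two identities that collapse the expressions. Because $2^{\alpha/\ln 2}=e^{\alpha}$, Eq.~(\ref{eq:EHwoR}) becomes $\dot{\tau}_0=\frac{D\ln 2}{W\alpha\gamma}(e^{\alpha}-1)$. Next, the definition $\alpha=\mathbb{L}_0\!\left(\frac{\gamma-1}{e}\right)+1$ together with the Lambert-W identity $\mathbb{L}_0(z)e^{\mathbb{L}_0(z)}=z$ yields the clean relation $\gamma=1+(\alpha-1)e^{\alpha}$, whence $\frac{d\gamma}{d\alpha}=\alpha e^{\alpha}>0$ for $\alpha>0$. Thus $\alpha$ is strictly increasing in $\gamma$ (equivalently, the principal branch $\mathbb{L}_0$ is increasing), and it is enough to show that $\dot{\tau}_{S_i}$ and $\dot{\tau}_0$ are decreasing in $\alpha$.

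For the IT duration this is immediate, since $\dot{\tau}_{S_i}=\frac{D\ln 2}{W\alpha}$ is manifestly decreasing in $\alpha$. For the EH duration the cleanest route is to factor $\dot{\tau}_0=\dot{\tau}_{S_i}\,\rho(\alpha)$, where the ratio is $\rho(\alpha)=\dot{\tau}_0/\dot{\tau}_{S_i}=(e^{\alpha}-1)/\gamma$. As $\dot{\tau}_{S_i}$ is positive and already decreasing, it remains only to show that $\rho$ is positive and decreasing. Differentiating and substituting $\frac{d\gamma}{d\alpha}=\alpha e^{\alpha}$, the numerator of $\rho'$ simplifies after cancellation to $e^{\alpha}\bigl(1+\alpha-e^{\alpha}\bigr)$. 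The one fact I need is the elementary inequality $e^{\alpha}>1+\alpha$ for $\alpha>0$, which makes this numerator negative, so $\rho'<0$. A product of two positive, decreasing functions is decreasing, so $\dot{\tau}_0$ is decreasing in $\alpha$, and the reduction of the previous paragraphs finishes the proof.

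The genuinely delicate point is the bookkeeping of the second paragraph rather than any deep inequality. A direct attack that differentiates Eq.~(\ref{eq:EHwoR}) in $\gamma$, carrying the implicit $\frac{d\alpha}{d\gamma}$ through the Lambert-W, produces an unwieldy expression whose sign is opaque; it is precisely the substitutions $2^{\alpha/\ln 2}=e^{\alpha}$ and $\gamma=1+(\alpha-1)e^{\alpha}$, followed by isolating the ratio $\rho$, that reduce everything to the one-line convexity bound $e^{\alpha}>1+\alpha$. I would also note that the physically relevant regime $\gamma>0$ corresponds exactly to $\alpha>0$ (the boundary $\alpha\to 0$ being the poor-channel limit in which both durations diverge), so the inequality is available throughout the range.
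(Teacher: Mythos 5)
Your proof is correct, and it takes a genuinely different route from the paper's. The paper works directly in the variable $\gamma_{S_i}$: for the IT duration it differentiates the Lambert-W closed form and settles the sign of the resulting expression by a case split on $\gamma_{S_i}>1$ versus $0<\gamma_{S_i}<1$; for the EH duration it sidesteps Eq.~(\ref{eq:EHwoR}) altogether and instead differentiates $V(\tau_{S_1})=\frac{\tau_{S_1}}{\gamma_{S_1}}(2^{D_{S_1}/(W\tau_{S_1})}-1)$ in $\gamma_{S_1}$ at fixed $\tau_{S_1}$, reading off negativity from $2^{D_{S_1}/(W\tau_{S_1})}>1$. Your change of variables via $\gamma=1+(\alpha-1)e^{\alpha}$ buys two things. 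First, the single observation $d\gamma/d\alpha=\alpha e^{\alpha}>0$ replaces the paper's case analysis on the sign of $\mathbb{L}_0\!\left(\frac{\gamma-1}{e}\right)$, and monotonicity of $\dot{\tau}_{S_1}=D\ln 2/(W\alpha)$ becomes immediate. Second, and more substantively, your factorization $\dot{\tau}_0=\dot{\tau}_{S_1}\,\rho(\alpha)$ with $\rho'<0$ from $e^{\alpha}>1+\alpha$ is a genuine total-derivative argument: it accounts for the fact that the optimal $\tau_{S_1}$ itself moves with $\gamma$, whereas the paper's shortcut exhibits only the partial derivative $\partial V/\partial\gamma$ at fixed $\tau_{S_1}$ and leaves the chain-rule term $\frac{\partial V}{\partial\tau_{S_1}}\cdot\frac{d\hat{\tau}_{S_1}}{d\gamma}$ (which is positive, being a product of two negative quantities) implicit; your version is therefore the more self-contained of the two. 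The only thing to keep explicit in a write-up is the point you already flag: the identification $\hat{\tau}_{S_1}^{AP}=\dot{\tau}_{S_1}$, $\hat{\tau}_0=\dot{\tau}_0$ rests on the standing assumption, stated just before the lemma, that $P^{max}$ is large enough for the power constraints (\ref{eq:relSel:maxpow1}) and (\ref{eq:relSel:maxpow2}) to be inactive.
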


\begin{proof}
We prove the lemma by showing that the first derivatives of $\hat{\tau}_0^i$ and $\hat{\tau}_{S_i}^{AP}$ with respect to $g_{S_i}^{AP}h_{AP}^{S_i}$ are negative. Due to the assumption on $P^{max}$, Eqs.~(\ref{eq:ITwoR}) and (\ref{eq:EHwoR}) determine $\{\hat{\tau}_0^i, \hat{\tau}_{S_i}^{AP}\}$. Note that $g_{S_i}^{AP}h_{AP}^{S_i}$ is proportional to $\gamma_{S_i}$, as  $\gamma_{S_i} = \frac{g_{S_i}^{AP} \zeta_i P_A h_{AP}^{S_i}}{WN_0}$. The first derivative of Eq.~(\ref{eq:ITwoR}) is 
\begin{equation}
\frac{d \hat{\tau}_{S_i}^{AP}}{d \gamma_{S_i}} = -\frac{\mathbb{L}_0 \left(\frac{\gamma_{S_i}-1}{e}\right)}{(-1+\gamma_{S_i})\left(1+\mathbb{L}_0 \left(\frac{\gamma_{S_i}-1}{e}\right)\right)^3}
\label{eq:derivIT}
\end{equation}
$\mathbb{L}_0(.) $ is the Lambert W-function in 0 branch. If $\gamma_{S_i} > 1$, , then $(-1+\gamma_{S_i}) > 0$ and $\mathbb{L}_0 \left(\frac{\gamma_{S_i}-1}{e}\right) >0$. Thus, ${d \bar{\tau}_{S_i}^{AP}}/{d \gamma_{S_i}} <0$.
 If $ 0 < \gamma_{S_i} < 1$, then $(-1+\gamma_{S_i}) < 0$ and $\mathbb{L}_0 \left(\frac{\gamma_{S_i}-1}{e}\right) <0$. Also, $-1 < \mathbb{L}_0 \left(\frac{\gamma_{S_i}-1}{e}\right) <0$, since $-1/e < (-1+\gamma_{S_i})/ e <0$. Thus, ${d \hat{\tau}_{S_i}^{AP}}/{d \gamma_{S_i}} <0$. As a result, $\hat{\tau}_{S_i}^{AP}$ is a decreasing function of $g_{S_i}^{AP}h_{AP}^{S_i}$.

To ease the calculations, the first derivative of Eq.~(\ref{rewriteCons21}) is derived instead of Eq.~(\ref{eq:EHwoR}).
\begin{equation}
\frac{d \hat{\tau}_0}{d \gamma_{S_i}} = \frac{\hat{\tau}_{S_i}}{\gamma^2_{S_i}} \left(1- 2^{\frac{D}{W \hat{\tau}_{S_i}}} \right) <0
\end{equation}
since $\frac{D}{W\tau_{S_i}} >0$ so $2^{\frac{D}{W\tau_{S_i}}} >1$. As a result, $\hat{\tau}_{0}^i$ is a decreasing function of $g_{S_i}^{AP}h_{AP}^{S_i}$.
\end{proof}

\begin{claim} \label{lemma:relCond} Under the moderate and high AWGN, the necessary condition for relay selection $R_1$ at source $S_1$ is 
\begin{equation}
	\min \left(g_{S_1}^{R_1}h_{AP}^{S_1}, g_{R_1}^{AP}h_{AP}^{R_1} \right) > g_{S_1}^{AP}h_{AP}^{S_1}
	\label{eq:relSelCondGeneral} 
\end{equation}
\end{claim}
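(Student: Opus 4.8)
The plan is to prove the statement by contraposition: I will show that whenever the channel condition~(\ref{eq:relSelCondGeneral}) fails, routing $S_1$'s data through $R_1$ cannot produce a schedule shorter than direct transmission, so a beneficial (hence selected) relay must satisfy~(\ref{eq:relSelCondGeneral}). The workhorse is a single monotone function. For fixed $\zeta,P_A,W,N_0$ and with $P^{max}$ non-binding (as assumed in the preamble), Theorem~\ref{thm:singleSource} gives the optimal single-source schedule $\dot{\tau}_0+\dot{\tau}_{S_1}$ of Eqs.~(\ref{eq:ITwoR})--(\ref{eq:EHwoR}), which depends on a link only through its effective gain $\gamma=g\zeta P_A h/(WN_0)$; denote this optimal length by $\Phi(\gamma)$. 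By Lemma~\ref{lemma:dec} both $\dot{\tau}_0$ and $\dot{\tau}_{S_1}$ are strictly decreasing in $\gamma$, so $\Phi$ is strictly decreasing. Direct transmission therefore costs $L_{\mathrm{dir}}=\Phi(\gamma^{AP})$ with $\gamma^{AP}\propto g_{S_1}^{AP}h_{AP}^{S_1}$.

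Next I would lower-bound the cooperative schedule $L_{\mathrm{rel}}=\tau_0+\tau_{S_1}^{R_1}+\tau_{R_1}^{AP}$ by its two hops separately. In the cooperative solution the triple $(\tau_0,\tau_{S_1}^{R_1},P_{S_1}^{R_1})$ already obeys the source's own energy and demand constraints (\ref{eq:relSel:energy1}) and (\ref{eq:relSel:demand1}), which are exactly the constraints of the single-source problem for the first hop with gain $\gamma^{(1)}\propto g_{S_1}^{R_1}h_{AP}^{S_1}$; hence $\tau_0+\tau_{S_1}^{R_1}\ge\Phi(\gamma^{(1)})$. Since $D_{S_1}>0$ forces $\tau_{R_1}^{AP}>0$ via (\ref{eq:relSel:demand2}), I obtain $L_{\mathrm{rel}}>\Phi(\gamma^{(1)})$. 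Symmetrically $(\tau_0,\tau_{R_1}^{AP},P_{R_1}^{AP})$ is feasible for the second-hop single-source problem with gain $\gamma^{(2)}\propto g_{R_1}^{AP}h_{AP}^{R_1}$, and $\tau_{S_1}^{R_1}>0$, giving $L_{\mathrm{rel}}>\Phi(\gamma^{(2)})$. This relaxation step, discarding one hop's constraints and reusing the shared harvesting slot $\tau_0$ as a feasible point for each isolated hop, is what lets me sidestep the coupling introduced by the common EH period.

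I would then combine the two bounds with the monotonicity of $\Phi$. Suppose~(\ref{eq:relSelCondGeneral}) fails, i.e.\ $\min(g_{S_1}^{R_1}h_{AP}^{S_1},\,g_{R_1}^{AP}h_{AP}^{R_1})\le g_{S_1}^{AP}h_{AP}^{S_1}$. If the first term attains the minimum then $\gamma^{(1)}\le\gamma^{AP}$, so $\Phi(\gamma^{(1)})\ge\Phi(\gamma^{AP})=L_{\mathrm{dir}}$ and thus $L_{\mathrm{rel}}>L_{\mathrm{dir}}$; if the second term attains it then $\gamma^{(2)}\le\gamma^{AP}$ and $L_{\mathrm{rel}}>\Phi(\gamma^{(2)})\ge L_{\mathrm{dir}}$ in the same way. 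In both cases the relay is strictly worse than the direct link and would never be chosen. Contrapositively, a beneficial relay forces both products strictly above $g_{S_1}^{AP}h_{AP}^{S_1}$, i.e.\ their minimum exceeds it, which is precisely~(\ref{eq:relSelCondGeneral}).

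The main obstacle is the clean translation between the $\gamma$-comparisons and the stated $g\cdot h$ comparisons. For the first hop it is immediate, since $\zeta_1$ and $h_{AP}^{S_1}$ are common to $\gamma^{(1)}$ and $\gamma^{AP}$ and cancel. For the second hop the relay's harvesting efficiency enters in place of the source's, so $\gamma^{(2)}\le\gamma^{AP}$ reduces to $g_{R_1}^{AP}h_{AP}^{R_1}\le g_{S_1}^{AP}h_{AP}^{S_1}$ only once the efficiencies are taken equal (or absorbed into the gains, as the statement implicitly does). This is where the ``moderate and high AWGN'' hypothesis and the equal-efficiency convention enter: in that regime the schedule is IT-dominated (cf.\ Remark~\ref{rem:maxeh}), the closed forms of Theorem~\ref{thm:singleSource} govern $\Phi$ away from the $P^{max}$ corner, and the condition collapses to the symmetric product form. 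I would close by stressing that the argument yields only necessity: the inequalities $L_{\mathrm{rel}}>\Phi(\gamma^{(1)}),\Phi(\gamma^{(2)})$ are mere lower bounds, so meeting~(\ref{eq:relSelCondGeneral}) does not by itself guarantee $L_{\mathrm{rel}}<L_{\mathrm{dir}}$, consistent with the claim being stated as a necessary condition only.
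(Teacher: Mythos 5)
Your proof is correct, but it follows a genuinely different route from the paper's. The paper argues directly: it constructs the relayed schedule via the MAX-EH policy (setting the common harvesting time to $\max\{\bar{\tau}_0^{S_1},\bar{\tau}_0^{R_1}\}$, justified by Remark~\ref{rem:maxeh} for moderate/high AWGN), splits into the two cases according to which hop dominates the harvesting requirement, and uses Lemma~\ref{lemma:dec} to compare that schedule against the direct one. You instead prove the contrapositive by a relaxation argument: the restriction of any feasible cooperative solution to a single hop is feasible for that hop's single-source problem, so $L_{\mathrm{rel}}$ strictly exceeds both $\Phi(\gamma^{(1)})$ and $\Phi(\gamma^{(2)})$, and monotonicity of $\Phi$ (again Lemma~\ref{lemma:dec}) finishes the job. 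Your version buys rigor and generality: it bounds the \emph{exact} optimal cooperative schedule rather than its MAX-EH surrogate, so the ``moderate and high AWGN'' hypothesis is not actually load-bearing in your argument (it only enters, as you note, in collapsing the $\gamma$-comparison to the symmetric $g\cdot h$ product form, an issue with the harvesting efficiencies that the paper silently glosses over as well). The paper's version, by contrast, stays closer to the algorithmic machinery (MAX-EH, Eq.~(\ref{eq:f_func2})) that is reused in Section~\ref{section:RSTTMA}, at the cost of a looser case analysis. Both proofs correctly establish only necessity.
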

\begin{proof} 
We prove the lemma by comparing the total schedule length of the cases with and without relay assistance. To compute the schedule lengths, we benefit from the heuristic algorithm MAX-EH instead of optimal algorithm  POWMU based on Remark~\ref{rem:maxeh}. 

By following MAX-EH solution strategy, the schedule lengths are calculated as follows. If there is no relay assistance, the solution  for the direct transmission, $\{\hat{\tau}_0, \hat{\tau}_{S_1}^{AP}\}$ , is obtained by Eqs.~(\ref{eq:ITwoR}) and (\ref{eq:EHwoR}). We denote the solution of three node WPCCN by $\{\tau'_0 , \tau_{S_1}^{'R_1}, \tau_{R_1}^{'AP}\}$. Let $\bar{\tau}_0^{S_1}$ [ $ \bar{\tau}_0^{R_1} $] be the optimal EH length when $S_1$ [$R_1$] is the single source targeting the $R_1$ [the AP] of the network.  $\tau'_0$ is set to the maximum of $\{\bar{\tau}_0^{S_1},  \bar{\tau}_0^{R_1} \}$; and $\tau_{S_1}^{'R_1}$ and $\tau_{R_1}^{'AP}$ are calculated for given $\tau'_0$ by Eq~(\ref{eq:f_func2}).
 
 Now, we show the condition where $\tau'_0 + \tau_{S_1}^{'R_1}+ \tau_{R_1}^{'AP}  <\hat{\tau}_0 + \hat{\tau}_{S_1}^{AP}$. Let us first consider the case where the source needs more energy than the relay, i.e., $\bar{\tau}_0^{S_1} > \bar{\tau}_0^{R_1}$.  By Lemma~\ref{lemma:EHandGamma}, this refers to $g_{S_1}^{R_1}h_{AP}^{S_1} < g_{R_1}^{AP}h_{AP}^{R_1}$.  $\tau'_0$ is set to $\bar{\tau}_0^{S_1}$  so it is a decreasing function of $g_{S_1}^{R_1}h_{AP}^{S_1}$. In direct transmission, $\hat{\tau}_0$ is decreasing function of $ g_{S_1}^{AP}h_{AP}^{S_1}$. Again based on Lemma~\ref{lemma:EHandGamma},   if $g_{S_1}^{AP}h_{AP}^{S_1} < g_{S_1}^{R_1}h_{AP}^{S_1}$, $\tau'_0 < \tau_0$ and $\tau_{S_1}^{'R_1} < \tau_{S_1}^{AP}$ hold. Otherwise, relay cannot bring improvement since already $\tau_0 + \tau_{S_1}^{AP} < \tau'_0 + \tau_{S_1}^{'R_1}$. As a result, in the case of $g_{S_1}^{R_1}h_{AP}^{S_1} < g_{R_1}^{AP}h_{AP}^{R_1}$, the necessary condition of relay selection is $g_{S_1}^{AP}h_{AP}^{S_1} < g_{S_1}^{R_1}h_{AP}^{S_1}$. 

Next we analyze the case where the source needs less energy than the relay, i.e., $\bar{\tau}_0^{S_1} < \bar{\tau}_0^{R_1} $. By Lemma~\ref{lemma:EHandGamma}, this refers to  $g_{S_1}^{R_1}h_{AP}^{S_1} > g_{R_1}^{AP}h_{AP}^{R_1}$. Similar to the previous case, inequality $g_{S_1}^{AP}h_{AP}^{S_1} < g_{R_1}^{AP}h_{AP}^{R_1}$ is the necessary condition for relay selection.
By considering both cases, the minimum of $\left(g_{S_1}^{R_1}h_{AP}^{S_1}, g_{R_1}^{AP}h_{AP}^{R_1} \right)$ must be greater than $g_{S_1}^{AP}h_{AP}^{S_1}$ for a relay to bring improvement.
\end{proof}


\begin{figure}[!ht]%
\centering
\subfloat[]{\label{fig:relCopArea_a}
\includegraphics[width=.5\columnwidth]{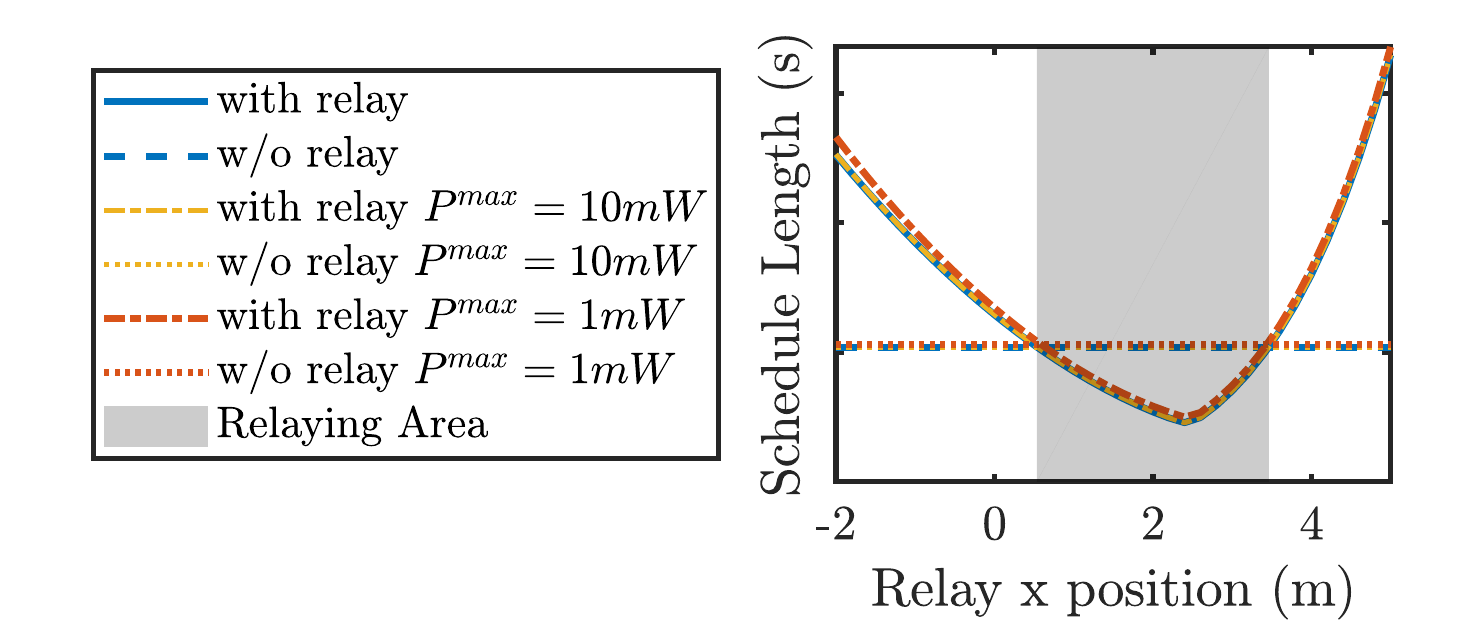}} 
\subfloat[]{ \label{fig:relCopArea_b}
\includegraphics[width=.5\columnwidth]{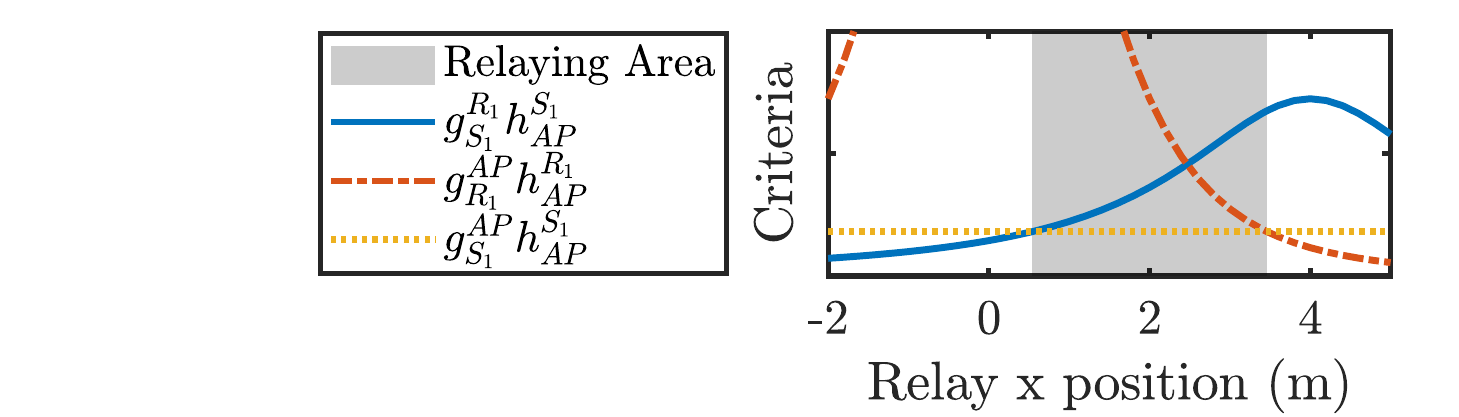}} \\
\caption{Example to validate Eq.~(\ref{eq:relSelCondGeneral}) as a relay selection criterion. (a) Schedule length vs. the relay position in three node WPCCN. (b) Components of the criterion given by Eq.~(\ref{eq:relSelCondGeneral}) vs. the relay position.}
\label{fig:relCopArea}
\end{figure}

Now, we exemplify the relay selection condition given in Eq.~(\ref{eq:relSelCondGeneral}). In this example,  an AP and a source are located at (0,0) and (4,0) in a gridded area, respectively.  The relay is moved  on the horizontal axis from (-2,2) to (5,2).  We assume that the channels gains are only dependent on the distance. The rest of the simulation parameters are given in Section~\ref{section:simRes}. Under moderate noise level, variation of the total transmission time as relay moves is depicted in Fig.~(\ref{fig:relCopArea_a}). The dashed blue line is the total time when information is transferred  without the relay; whereas the solid blue line is the total time when information is transferred via the relay. By comparing blue lines, it is observed that transmission with the relay's help results in lower total time when relay is in between $x=0.53592$ and $x=3.46408$. The shaded area is the relaying region determined by Eq.~(\ref{eq:relSelCondGeneral}). The distance between the intersection point of blue lines and the closest edge to that point of the shaded area is only $2\times 10^{-5}$m. This demonstrates that Eq.~(\ref{eq:relSelCondGeneral}) gives the necessary condition for relay selection at moderate noise level.


In Fig.~(\ref{fig:relCopArea_a}), the orange and red lines correspond to the results for different $P^{max}$ values. As $P_{max}$ decreases, the error in the shaded area increases, however, we observe that in realistic power levels such as 10 mW, the error is negligible, e.g., $\%0.4$.




\subsection{Heuristic Algorithm}\label{section:RSTTMA}

The straightforward approach to determine source-relay pairs for the solution of Problem~(\ref{eq:relSel}) is choosing the individual best relay for each source.  Then, the total schedule length can be calculated by POWMU.  However, this is not necessarily the best selection for the entire network.  If one specific relay is assigned to many sources, then the selected relay needs to forward many messages. This requires more EH time resulting in an increase in the schedule length of the network. Therefore, next, we develop a heuristic algorithm, called RSTMA, which starts with an initial solution obtained by individual relay selection and reassigns a source from a relay to another in each iteration as long as there is improvement in the total schedule length.

The individual relay selection is performed as follows. In Fig.~(\ref{fig:relCopArea_b}), the blue line indicates the criterion $g_{S_1}^{R_1}h_{AP}^{S_1}$; whereas the red line represents the criterion $g_{R_1}^{AP}h_{AP}^{R_1}$. We observe that the higher $\min \left(g_{S_i}^{R_j}h_{AP}^{S_i}, g_{R_j}^{AP}h_{AP}^{R_j} \right)$ value, the lower the total EH and IT time. Accordingly, the individual relay selection criterion can be defined as 
\begin{equation}
\argmax_j \min \left(g_{S_i}^{R_j}h_{AP}^{S_i}, g_{R_j}^{AP}h_{AP}^{R_j} \right)
\label{eq:relaySelectCriteria}
\end{equation}
for source $S_i$. Note that this criterion is same as the opportunistic relaying (OR) criterion proposed for outage performance in \cite{Chen2015}. 

\begin{algorithm}
\caption{ Relay Selection and Time Minimization Algorithm (RSTMA)}\label{alg:MTNSKR}
\begin{algorithmic}[1]
\small \STATE $k_i=\argmax_{j\in\{0,\ldots,K\}}\min \left(g_{S_i}^{R_j}h_{AP}^{S_i}, g_{R_j}^{AP}h_{AP}^{R_j}\right), \; \forall i=1,\ldots,N$ 
\small \STATE $\tau \leftarrow$  POWMU ($S_i, R_{k_i}$),   $i=1,\ldots,N$ 
\small \STATE $\tau'=\tau$ 
\small \WHILE { $\tau' \leq \tau$}
\small \STATE $G_j= \{S_i: k_i=j, i=1\ldots,N\},  j=0,1,\ldots,K$
\small \STATE  $J =\text{sort}_j (|G_j|>1,'descend')$
\small \IF {$J \neq \emptyset$}
\small \FOR {$j\in J $}
\small \FOR { $S_i\in G_j$}

	\small \STATE $W =\textbf{sort}_j ( \min \left(g_{S_i}^{R_j}h_{AP}^{S_i}, g_{R_j}^{AP}h_{AP}^{R_j} \right),\; \forall j, j\neq k_i,'descend')$
	\small \FOR {$w \in W$}
		\small \STATE  $k_i \leftarrow w$ 
		\small \STATE $\tau' \leftarrow$  POWMU ($S_i, R_{k_i}$),   $i=1,\ldots,N$ 
		\small \IF {$\tau' < \tau$}
			\small \STATE $\tau = \tau'$
			\small \STATE \textbf{break}
		\small \ENDIF
	\small \ENDFOR
\small \ENDFOR
\small \ENDFOR
\small \ENDIF
\small \ENDWHILE
\end{algorithmic}
\end{algorithm}

The detailed procedure for  RSTMA is given in Algorithm~\ref{alg:MTNSKR}. Initially, we match each ${S_i}$ with $R_{k_i}$ by using the criterion given in Eq.~(\ref{eq:relaySelectCriteria}) (Line 1). POWMU finds the total schedule length $\tau$ for the given relay selection (Line 2). $\tau'$ is used to store a new schedule length and initially is set to $\tau$ (Line 3). The following procedure continues while there is an improvement in schedule length (Line 4). The sources assigned to relay $R_j$ are grouped under $G_j$ for all $j$ (Line 5). The relay with the greatest number of the sources is most probably the one requiring more EH time. Thus, starting with the ones belonging to the group of such relay, the sources are reassigned to the other relays one by one (Lines 6-9). As the aim is to lower the burden of the relay assisting multiple sources, the relays assisting single source is not in the interest of the reassignment procedure (Line 6). Since the relay $R_j$ with higher  $\min \left(g_{S_i}^{R_j}h_{AP}^{S_i}, g_{R_j}^{AP}h_{AP}^{R_j} \right)$ is expected to provide lower schedule length based on Eq.~(\ref{eq:relaySelectCriteria}),  a source is first reassigned to such a relay. If this change does not bring an improvement, other relay options are tried (Lines 10-13). Whenever an improvement is observed, $\tau$ is updated by $\tau'$; the cycle is broken for the sake of time complexity; the algorithm searches for another source-relay reassignment (Lines 14-16). 



\vspace*{-10mm}

\section{Performance Evaluation} \label{section:simRes}
In this section, the performance of the proposed algorithms,  BBA, OBH, RPH, and RSTMA, is evaluated via numerical simulations and compared with benchmark harvest-then-cooperate protocol (HTC) from \cite{Chen2015}, and the initial solution of RSTMA, denoted by OR+POWMU, as it uses OR criteria for relay selection and POWMU for schedule length calculation. HTC divides unit time block $T$ into $\rho T$ for EH and $(1-\rho)T$ for IT. It considers single source network and equal length IT blocks $(1-\rho)T/2$ for source-to-relay and relay-to-AP communication.  We adapt HTC scheme to multiple source network by allocating $(1-\rho)T/2N$ time for each  source-to-relay and relay-to-AP transmission. The unit time block is assumed as 1 ms, i.e., $T=1 ms$. $\rho$ is set to 0.8, which is the best value for schedule length minimization according to our simulations. We additionally include the enforcement for obeying $P^{max}$ limitation. For relay selection, OR criteria, corresponding to Eq.~(\ref{eq:relaySelectCriteria}), is applied.


Simulations are conducted over 1000 independent random network realizations in MATLAB. Sources are uniformly distributed in a circular quadrant between radius $3-4$ m. The AP is located at the center of the circle. Relays are deterministically positioned between the sources and the AP at distance $2$ m from the AP. $5$-user $2$-relay WPCCN is considered unless otherwise stated, as different sized networks behave similarly. The WPCCN is simulated at $915$ MHz central frequency with $1$ MHz bandwidth.

The channel attenuation is modeled as follows. Large scale statistics are formulated as $ PL(d) = PL(d_0) - 10\upsilon \log_{10}(d/d_0) + Z$, where $PL(d_0)$ is the free space path loss at unit distance $d_0$ in dB, $d$ is the distance between the transmitter and receiver, $PL(d)$ is the path loss at distance $d$ in dB, $\upsilon$ is the path-loss exponent, and $Z$ is a zero-mean Gaussian random variable with standard deviation $\sigma_Z$. For small scale statistics, Rayleigh fading model is used with scale parameter $\Omega$ set to the mean power level determined by $PL(d)$ \cite{Morsi2015}. The simulation parameters are chosen as $N_0 = -90dBm$ , $\zeta =50\%$ \cite{Le2008}, $PL(d_0)=31.67$, $\sigma_Z= 2dB^2$, and $\upsilon= 2$. The transmit power of the AP is chosen as $4$ Watts, which is the maximum allowed power at $915$ MHz band according to FCC regulations \cite{Lu2015}. All users are assumed to transmit equal amount of data, $50$ bits, to the AP.  

\begin{figure}[!ht]%
\centering
\subfloat[]{\label{fig:time_vs_N}
\includegraphics[width=.5\columnwidth]{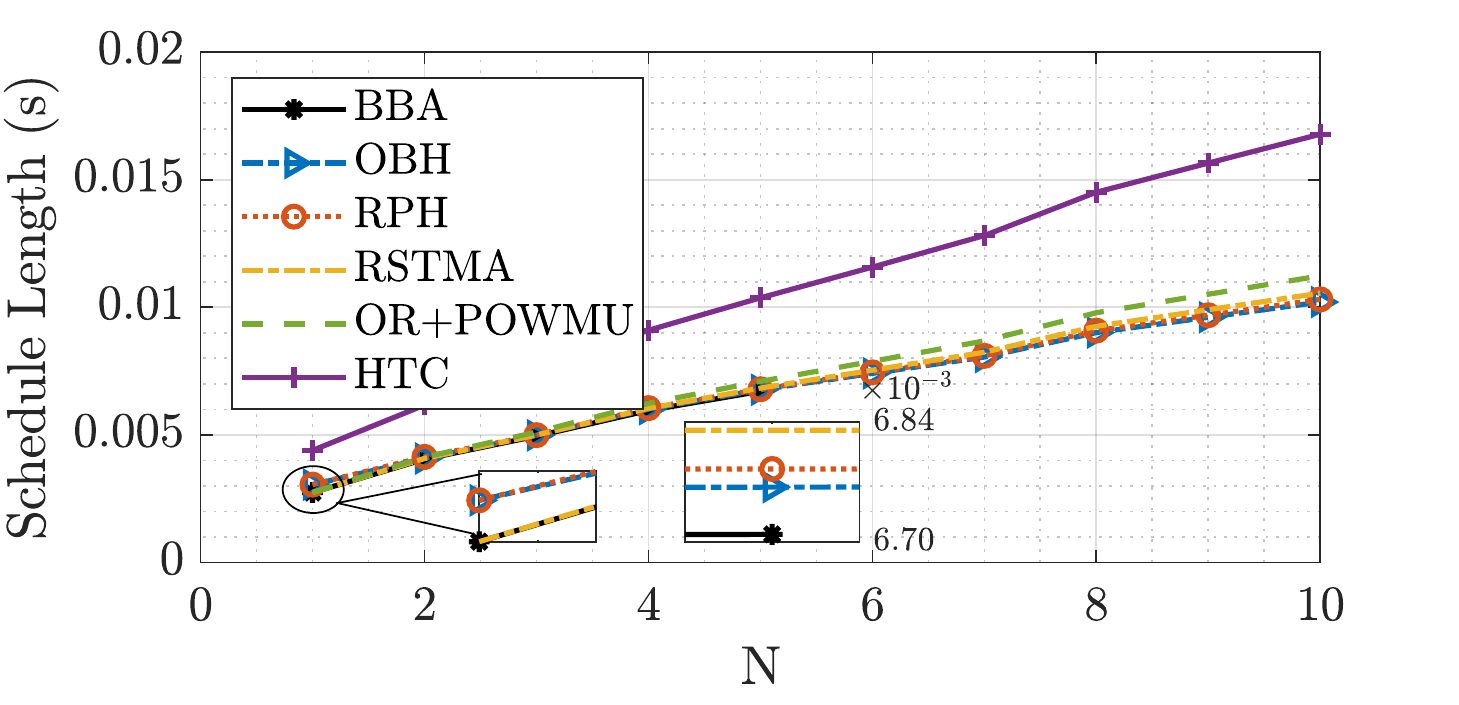}} 
\subfloat[]{ \label{fig:time_vs_K}
\includegraphics[width=.5\columnwidth]{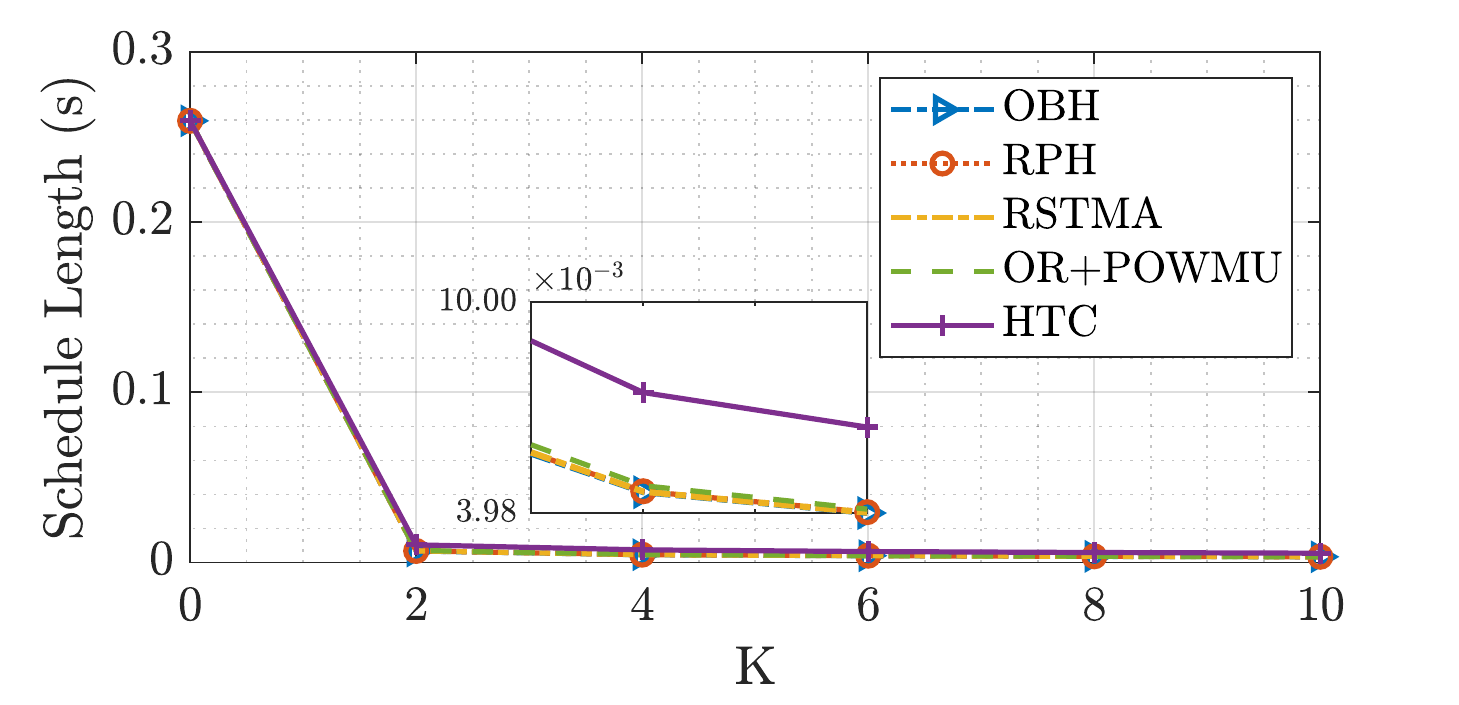}} \\
\subfloat[]{\label{fig:time_vs_pmax}
\includegraphics[width=.5\columnwidth]{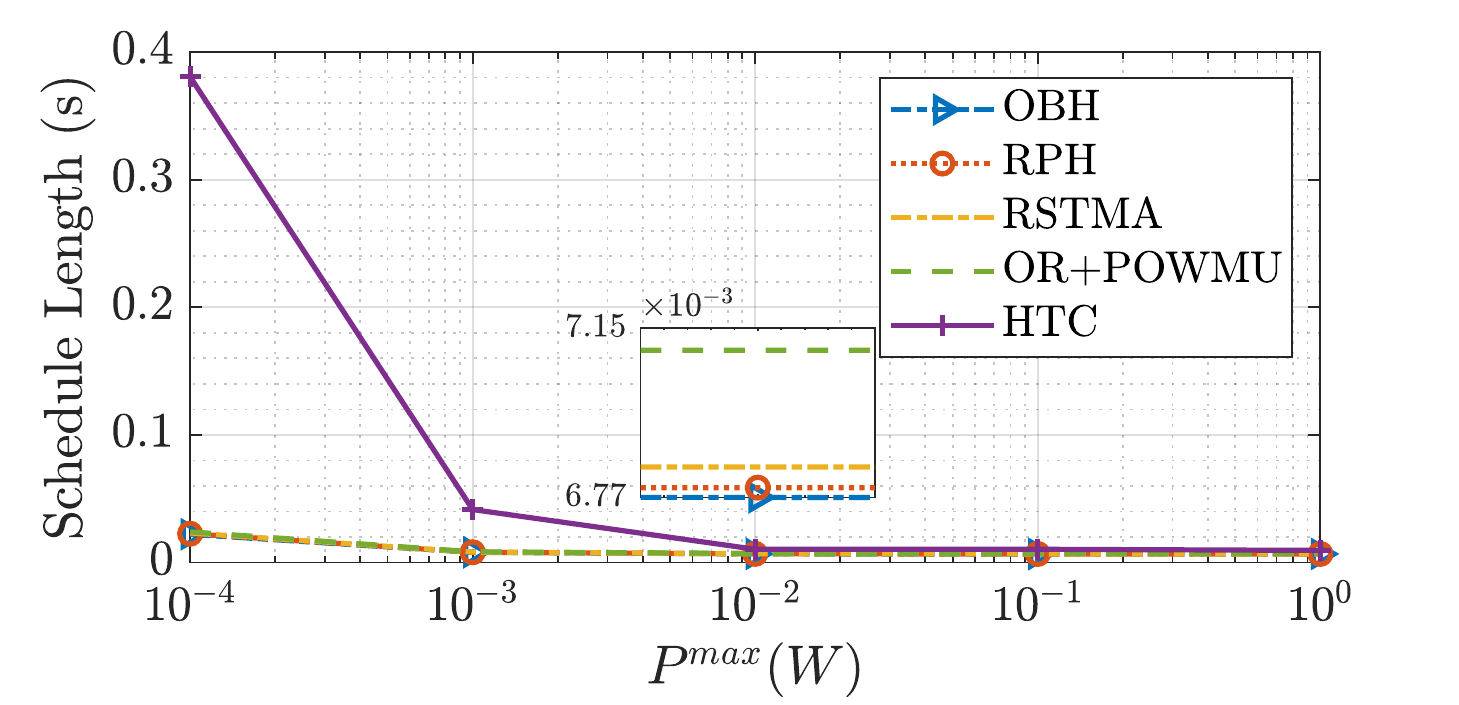}}%
\subfloat[]{\label{fig:relay_pos}
\includegraphics[width=.5\columnwidth]{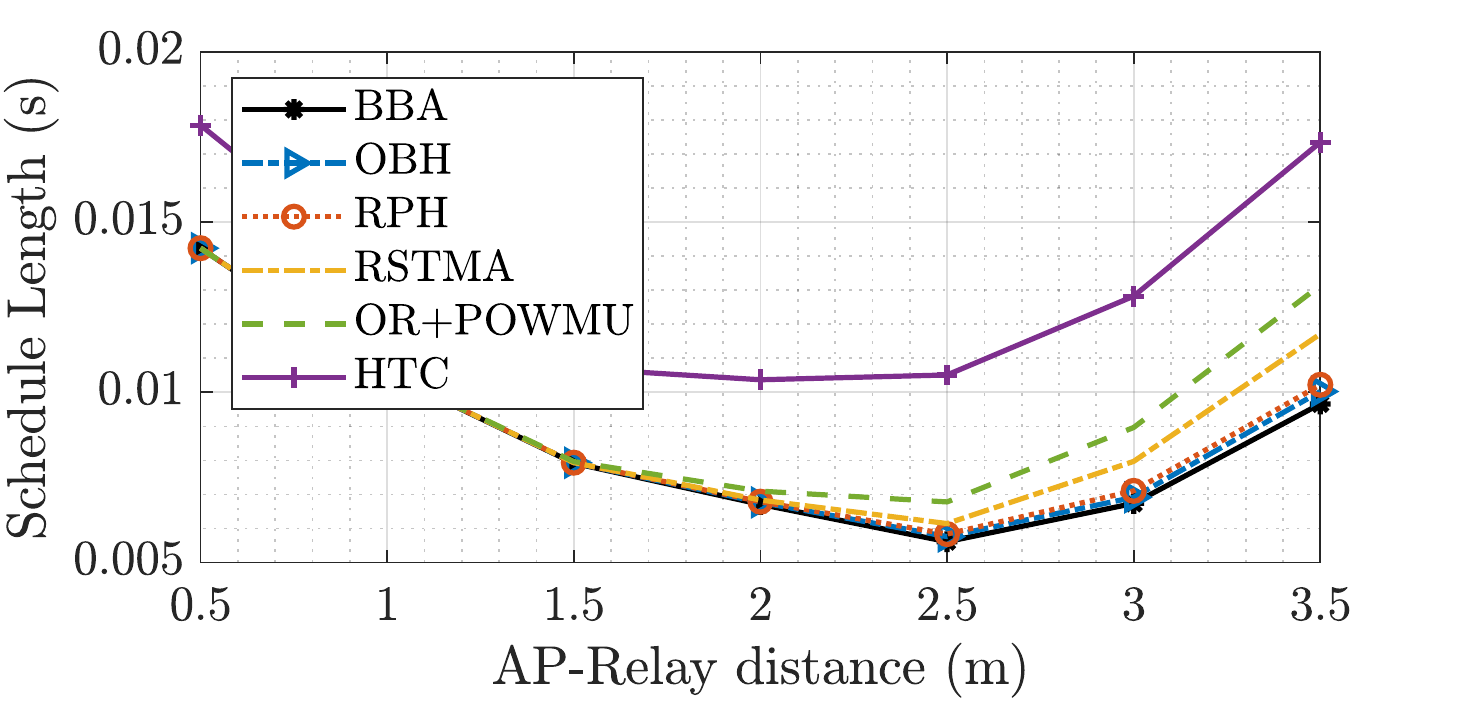}}\\%
\caption{Schedule length vs. (a) number of sources, (b) number of relays, (c) maximum power limit of the sources and relays, and (d) relay position.}
\label{fig:sim_results}
\end{figure}

Figs.~(\ref{fig:time_vs_N})-(\ref{fig:relay_pos}) depict the schedule length of the network obtained by different algorithms for varying number of users, number of relays, maximum UL transmit power limit ($P^{max}$), and relay position, respectively. BBA is only simulated for the first and fourth cases up to $5$ sources due to its exponential time complexity. The proposed algorithms outperform the benchmark approach HTC; whereas the heuristic approaches, OBH, RPH, and RSTMA, perform very closely to the optimum algorithm BBA. The difference in the schedule length computed by the proposed heuristics results from the assignment of the relays, as POWMU obtains the optimum schedule length for given relay selection. OBH performs best, RPH is the second best, which is followed by RSTMA.  Comparison of RSTMA with OR+POWMU indicates the significance of the further improvement by RSTMA from its initial point determined by OR+POWMU, especially for higher number of users. This observation highlights the fact that the individual best relay selection is not necessarily the best for the entire network. The figures are further analyzed as follows.

In Fig.~(\ref{fig:time_vs_N}), the schedule length increases with the increasing number of users, as expected. For $N=5$, BBA provides $35\%$ improvement in schedule length compared to HTC; whereas the gap between BBA and RSTMA is $1.86\%$, the gap between BBA and RPH is $1.18\%$, and the gap between BBA and OBH is only $0.85\%$. Note that for smaller number of users, RSTMA provides lower schedule length than RPH.
 
In Fig.~(\ref{fig:time_vs_K}), the schedule length decreases with the increasing number of relays, which indicates the benefit of the relays. Using $2$ relays reduces the schedule length of the network by $97\%$; whereas using $10$ relays decreases it up to $98.6\%$. This demonstrates diminishing returns as the number of relays increases.


In Fig.~(\ref{fig:time_vs_pmax}), the gap between the HTC and proposed algorithms is higher at lower $P^{max}$ levels as HTC does not consider transmit power limitation, e.g., $88\%$ gap for $P^{max}= 10^{-4}  W$. The gap closes as $P^{max}$ increases, however, there is still $20\%$ gap between HTC and OBH for $P^{max}= 1 W$.

Fig.~(\ref{fig:relay_pos}) depicts the schedule length of the network for different relay positions, where the AP and sources stay at the same positions. The relays bring the highest improvement when they are located at $2.5$ m, closer to the sources than the AP. Between $0.5-2.5$ m,  all the proposed heuristics performs very close to BBA. After $2$ m, the performance of RSTMA degrades.
 
Fig.~(\ref{fig:runtime_vs_N}) shows the run-time performance of the proposed algorithms. BBA has exponential complexity in practical simulations as well. The runtime of OBH rapidly increases with the increasing number of users as it has to solve the relaxed problem many times and the complexity of this solution depends on the number of users.  The runtime of RPH linearly increases as it solves the relaxed problem just once. For $N=5$, RSTMA performs $99\%$ and $97\%$ faster than OBH and RPH, respectively. RSTMA has higher runtime than OR+POWMU, as it improves the OR criteria with further iterations depending on the number of sources. 

\begin{figure}[!ht]
    \centering
    \includegraphics[width=.5\columnwidth]{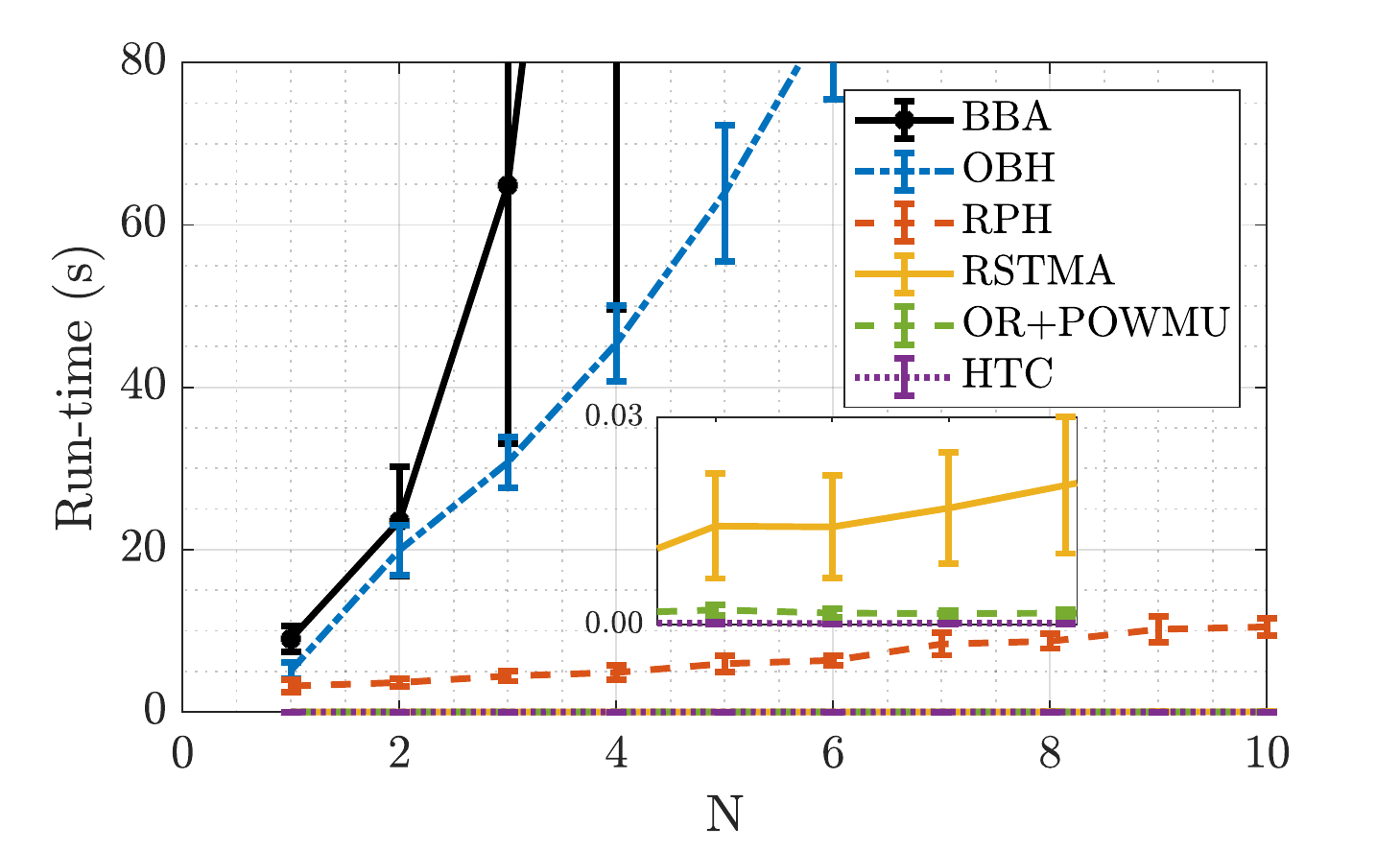}
    \caption{Run-time of the algorithms vs. number of users.}
    \label{fig:runtime_vs_N}
\end{figure}


\vspace*{-10mm}

\section{Conclusion} \label{section:conc}
We formulate a novel joint relay selection, scheduling and power control problem for multiple source multiple relay WPCCN. The problem aims to minimize the total schedule duration of energy harvesting and information transmission, while satisfying data, energy causality, and maximum UL transmit power constraints. The problem is non-convex MINLP and proven to be NP-hard. Given the relay selection, we first formulate a scheduling and power control problem, and then propose an efficient optimal and a faster sub-optimal algorithm for the solution. To determine relay assignments, we first adapt an exponential-time branch-and-bound (BB) based algorithm for the optimal solution of the problem. Then, two BB-based heuristic approaches are proposed together with a lower complexity relay criterion based algorithm. We demonstrate via numerical simulations that our proposed approaches exhibit better performance than the conventional harvest-then-cooperate approaches with up to $88\%$ lower schedule length at various network settings, especially at lower UL transmit power limits.

\vspace*{-5mm}

\bibliographystyle{IEEEtran}
\bibliography{relSel_onecolumn}
\end{document}